\newcommand{\Sect}[1]{Section~\ref{#1}}
\newcommand{\sect}[1]{Sec.~\ref{#1}}
\newcommand{\Sects}[1]{Sections~\ref{#1}}
\newcommand{\eqn}[1]{Eq.~\eqref{#1}}
\newcommand{\eqns}[1]{Eqs.~\eqref{#1}}
\newcommand{\thm}[1]{Theorem~\ref{#1}}
\newcommand{\axm}[1]{Axiom~\ref{#1}}
\newcommand{\axms}[1]{Axioms~\ref{#1}}
\newcommand{\refcite}[1]{\cite{#1}}
\newcommand{\refcites}[1]{\cite{#1}}
\newtheorem{axiom}{Axiom}
\newtheorem*{axiom*}{Axiom}
\newtheorem{theorem}{Theorem}
\newtheorem*{theorem*}{Theorem}
\newtheorem*{lemma*}{Lemma}
\newtheorem*{corollary*}{Corollary}
\newtheorem*{proposition*}{Proposition}
\newtheorem*{definition*}{Definition}
\newcommand{\me}{\mathrm{e}}
\newcommand{\mi}{\mathrm{i}}
\newcommand{\T}{\mathrm{T}}
\newcommand{\Vol}{\mathcal{V}}
\newcommand{\dif}{\mathrm{d}}
\newcommand{\deriv}[2]{\frac{\dif #1}{\dif #2}}
\newcommand{\pderiv}[2]{\frac{\partial #1}{\partial #2}}
\newcommand{\dt}{\dif t}
\newcommand{\dl}{\dif \lambda}
\newcommand{\dthree}{\dif^{3}}
\newcommand{\dfour}{\dif^{4}}
\newcommand{\intthree}{\int \dthree}
\newcommand{\intfour}{\int \dfour}
\newcommand{\Dif}{\mathrm{D}}
\newcommand{\Dfour}{\Dif^{4}}
\newcommand{\intDfour}{\int \Dfour}
\newcommand{\threevec}[1]{\bm{#1}}
\newcommand{\threex}{\threevec{x}}
\newcommand{\threexp}{\threex'}
\newcommand{\threep}{\threevec{p}}
\newcommand{\threepp}{\threep'}
\newcommand{\E}[1]{\omega_{#1}} 
\newcommand{\Ep}{\E{\threep}}
\newcommand{\Epp}{\E{\threepp}}
\newcommand{\qmu}{q^{\mu}}
\newcommand{\qmul}{\qmu(\lambda)}
\newcommand{\qdot}{\dot{q}}
\newcommand{\qdotsq}{\qdot^{2}}
\newcommand{\xz}{x_{0}}
\newcommand{\threexz}{\threex_{0}}
\newcommand{\lambdaz}{\lambda_{0}}
\newcommand{\In}{_{\mathrm{in}}}
\newcommand{\Out}{_{\mathrm{out}}}
\newcommand{\I}{_{\mathrm{I}}}
\newcommand{\F}{_{\mathrm{F}}}
\newcommand{\Int}{^{\mathrm{int}}}
\newcommand{\op}[1]{\hat{#1}}
\newcommand{\opbar}[1]{\op{\bar{#1}}}
\newcommand{\opA}{\op{A}}
\newcommand{\opB}{\op{B}}
\newcommand{\opH}{\op{H}}
\newcommand{\opG}{\op{G}}
\newcommand{\opP}{\op{P}}
\newcommand{\opS}{\op{S}}
\newcommand{\opU}{\op{U}}
\newcommand{\opV}{\op{V}}
\newcommand{\opX}{\op{X}}
\newcommand{\opv}{\op{v}}
\newcommand{\conj}{^{*}}
\newcommand{\adj}{^{\dag}}
\newcommand{\dadj}{^{\ddag}}
\newcommand{\adv}[1]{#1^{+}}       
\newcommand{\param}[5]{#1#2#3;#4#5}
\newcommand{\paraml}[2]{\param{#1}({#2}{\lambda})}
\newcommand{\paramlz}[2]{\param{#1}({#2}{\lambdaz})}
\newcommand{\psil}[1]{\paraml{\psi}{#1}}
\newcommand{\psilz}[1]{\paramlz{\psi}{#1}}
\newcommand{\psixl}{\psil{x}}
\newcommand{\psixlz}{\psilz{\xz}}
\newcommand{\paramk}[5]{\param{#1}{#2}{#3}{#4}{#5}}
\newcommand{\kersym}{\Delta}
\newcommand{\kerneld}{\paramk{\kersym}({x - \xz}{\lambda - \lambdaz})}
\newcommand{\kersymn}{\Delta_{n}}
\newcommand{\kerneln}{\paramk{\kersymn}({x - \xz}{\lambda - \lambdaz})}
\newcommand{\kernelnar}{\paramk{\kersymn}({x - \xz}{\pm (\lambda - \lambdaz)})}
\newcommand{\propsym}{\Delta}
\newcommand{\prop}{\propsym(x - \xz)}
\newcommand{\propasym}{\adv{\propsym}}
\newcommand{\propa}{\propasym(x - \xz)}
\newcommand{\propsymn}{\propsym_{n}}
\newcommand{\propasymn}{\adv{\propsymn}}
\newcommand{\propan}{\propasymn(x - \xz)}
\newcommand{\fnW}{\mathcal{W}}
\newcommand{\GroupP}{\mathcal{P}}
\newcommand{\GroupSO}{\mathrm{SO}}
\newcommand{\Hilb}[1]{\mathcal{#1}}
\newcommand{\HilbA}{\Hilb{A}}
\newcommand{\HilbD}{\Hilb{D}}
\newcommand{\HilbH}{\Hilb{H}}
\newcommand{\HilbK}{\Hilb{K}}
\newcommand{\HilbL}{\Hilb{L}}
\newcommand{\bra}[1]{\langle #1 |}
\newcommand{\ket}[1]{| #1 \rangle}
\newcommand{\inner}[2]{\langle #1 | #2 \rangle}
\newcommand{\params}[2]{\param{}{}{#1}{#2}{}}
\newcommand{\vac}{0}
\newcommand{\bravac}{\bra{\vac}}
\newcommand{\ketvac}{\ket{\vac}}
\newcommand{\bral}[1]{\bra{\params{#1}{\lambda}}}
\newcommand{\bralz}[1]{\bra{\params{#1}{\lambdaz}}}
\newcommand{\bralp}[1]{\bra{\params{#1}{\lambda'}}}
\newcommand{\brax}{\bra{x}}
\newcommand{\ketl}[1]{\ket{\params{#1}{\lambda}}}
\newcommand{\ketlz}[1]{\ket{\params{#1}{\lambdaz}}}
\newcommand{\ketxl}{\ketl{x}}
\newcommand{\ketx}{\ket{x}}
\newcommand{\ketp}{\ket{p}}
\newcommand{\innerll}[2]{\langle #1; \lambda | #2; \lambda \rangle}
\newcommand{\innerllz}[2]{\langle #1; \lambda | #2; \lambdaz \rangle}
\newcommand{\argmn}[3]{#1_{#2}, \ldots, #1_{#3}}
\newcommand{\argn}[2]{\argmn{#1}{1}{#2}}
\newcommand{\argN}[1]{\argn{#1}{N}}
\newcommand{\argNx}{\argN{x}}
\newcommand{\seqmn}[3]{#1{#2}; \ldots ;#1{#3}}
\newcommand{\seqn}[2]{\seqmn{#1}{1}{#2}}
\newcommand{\seqN}[1]{\seqn{#1}{N}}
\newcommand{\xli}[1]{x_{#1}, \lambda_{#1}}
\newcommand{\xlzi}[1]{x_{#1}, \lambdaz}
\newcommand{\xliN}{\seqN{\xli}}
\newcommand{\xlziN}{\argNx; \lambdaz}
\newcommand{\na}{n_{+}}
\newcommand{\nr}{n_{-}}
\newcommand{\nar}{n_{\pm}}
\newcommand{\oppsi}{\op{\psi}}
\newcommand{\oppsit}{\oppsi\adj}
\newcommand{\oppsil}[1]{\paraml\oppsi{#1}}
\newcommand{\oppsilz}[1]{\paramlz\oppsi{#1}}
\newcommand{\oppsitl}[1]{\paraml{\oppsi\adj}{#1}}
\newcommand{\oppsitlz}[1]{\paramlz{\oppsi\adj}{#1}}
\newcommand{\oppsix}{\oppsi(x)}
\newcommand{\oppsixl}{\oppsil{x}}
\newcommand{\oppsinn}[1]{\oppsi_{#1}}
\newcommand{\oppsinnt}[1]{\oppsi_{#1}\adj}
\newcommand{\oppsin}{\oppsinn{n}}
\newcommand{\oppsina}{\oppsinn{\na}}
\newcommand{\oppsinr}{\oppsinn{\nr}}
\newcommand{\oppsinar}{\oppsinn{\nar}}
\newcommand{\oppsinnl}[2]{\paraml{\oppsinn{#1}}{#2}}
\newcommand{\oppsinnlz}[2]{\paramlz{\oppsinn{#1}}{#2}}
\newcommand{\oppsinntlz}[2]{\paramlz{\oppsinnt{#1}}{#2}}
\newcommand{\oppsinarl}[1]{\oppsinnl{\nar}{#1}}
\newcommand{\oppsinarlz}[1]{\oppsinnlz{\nar}{#1}}
\newcommand{\oppsinartlz}[1]{\oppsinntlz{\nar}{#1}}
\newcommand{\oppsinal}[1]{\oppsinnl{\na}{#1}}
\newcommand{\oppsinrl}[1]{\oppsinnl{\nr}{#1}}
\newcommand{\oppsinarxl}{\oppsinarl{x}}
\newcommand{\oppsinaxl}{\oppsinal{x}}
\newcommand{\oppsinrxl}{\oppsinrl{x}}
\newcommand{\oppsilll}[1]{\oppsi_{#1}}
\newcommand{\oppsinnlll}[2]{\oppsi_{#1,#2}}
\newcommand{\oppsill}{\oppsi_{\lambdaz}}
\newcommand{\oppsinnll}[1]{\oppsinnlll{#1}{\lambdaz}}
\newcommand{\oppsinll}{\oppsinnll{n}}
\newcommand{\oppsinall}{\oppsinnll{\na}}
\newcommand{\oppsinrll}{\oppsinnll{\nr}}
\newcommand{\oppsinarll}{\oppsinnll{\nar}}
\newcommand{\opPsi}{\op{\Psi}}
\begin{document}

\title{Axiomatic, Parameterized, Off-Shell QFT}


\author{Ed Seidewitz}
\email{seidewitz@mailaps.org}
\affiliation{14000 Gulliver's Trail, Bowie MD 20720 USA}

\date{21 November 2016}


\begin{abstract}
        
Axiomatic quantum field theory (QFT) provides a rigorous mathematical
foundation for QFT, and it is the basis for proving some important
general results, such as the well-known spin-statistics theorem.
Free-field QFT meets the axioms of axiomatic QFT, showing they are
consistent.  Nevertheless, even after more than 50 years, there is
still no known non-trivial theory of quantum fields with interactions
in four-dimensional Minkowski spacetime that meets the same axioms.

This paper provides a similar axiomatic basis for \emph{parameterized}
QFT, in which an invariant, fifth \emph{path parameter} is added to
the usual four spacetime position arguments of quantum fields.
Dynamic evolution is in terms of the path parameter rather than the
frame-dependent time coordinate.  Further, the states of the theory
are allowed to be \emph{off shell}.  Particles are therefore
fundamentally ``virtual'' during interaction but, in the appropriate
non-interacting, large-time limit, they dynamically tend towards
``physical'', on-shell states.

Unlike traditional QFT, it is possible to define a mathematically
consistent interaction picture in parameterized QFT. This may be used
to construct interacting fields that meet the same axioms as the
corresponding free fields.  One can then re-derive the Dyson series
for scattering amplitudes, but without the mathematical inconsistency
of traditional, perturbative QFT. The present work is limited to the
case of scalar fields, and it does not address remaining issues of
gauge symmetry and renormalization.  Nevertheless, it still
demonstrates that the parameterized formalism can provide a consistent
foundation for the interpretation of QFT as used in practice and,
perhaps, for better dealing with its further mathematical issues.

\end{abstract}

\maketitle
\titlepage


\section{Introduction} \label{sect:intro}

\subsection{Why Off Shell?} \label{sect:why-off-shell}

The momentum mass-shell condition $p^{2} = -m^{2}$, using a spacetime
metric signature of $(-+++)$, is the fundamental kinematic
relationship in special relativity.  It therefore seems obvious that
it should be carried over to relativistic quantum mechanics.  Indeed,
the operator form of this condition is just the Klein-Gordon equation
in quantum field theory (QFT), and it also determines the spectrum of
the relativistic momentum operator.

However, the state space of QFT still naturally includes both
\emph{on-shell} states that meet the mass-shell condition and
\emph{off-shell} states that do not.  The ``physical,'' on-shell
states need to be picked out as those satisfying the mass-shell
constraint.  Nevertheless, off-shell states cannot be completely
eliminated from the theory, since propagation is off shell along the
internal segments of Feynman diagrams generated using perturbative
QFT.

In the usual derivation (as given in any traditional QFT textbook,
such as \refcites{peskin95} or \cite{weinberg95}), off-shell particles
appear in the formalism because the end points of an internal segment
may have either time ordering.  However, physical particles are
considered to propagate forward in time, with the on-shell propagator
\begin{equation*} 
    \propa \equiv (2\pi)^{-3}\intthree p\,
	\frac{\me^{\mi[-\Ep(x^{0}-\xz^{0}) + 
	               \threep\cdot(\threex-\threexz)]}}
	     {2\Ep} \,,
\end{equation*}
for $x^{0} > \xz^{0}$ and $\Ep = \sqrt{\threep^{2} + m^{2}}$.  When
``backward'' time propagation is then re-interpreted as being for an
(on-shell) antiparticle propagating forward in time, and the particle
and antiparticle propagators are combined, the result is the
relativistically invariant Feynman propagator
\begin{equation}
    \begin{split} \label{eqn:A1}
	\prop &= \theta(x^{0}-\xz^{0})\propasym(x-\xz) 
		    + \theta(\xz^{0}-x^{0})\propasym(x-\xz)\conj \\
	      &= -\mi(2\pi)^{-4}\intfour p\,
		    \frac{\me^{\mi p\cdot(x - \xz)}}
			 {p^{2} + m^{2} - \mi\varepsilon} \,.
   \end{split}
\end{equation}
This allows for the transfer of any possible four-momentum, not just a
momentum that meets the mass-shell condition.

Such internal segments are said to represent \emph{virtual} particles
because, of course, ``real,'' physical particles must be on shell.
However, as Feynman noted \cite{feynman61}, ``real'' particles
actually always have a finite lifetime (at least if they are ever to
be detected), and so effectively propagate on finite-length internal
segments of Feynman diagrams.  As a result, they are all technically
virtual particles!

Of course, a particle does not have to propagate very far before the
interaction by which it is observed is essentially in the future light
cone of all classical observers of interest.  In this case, the
particle is unambiguously in the future of all such observers and can
be considered to be on shell.  But still, fundamentally, a particle is
only truly, exactly on shell in the limit of propagation into the
infinite future without interaction, at which point it can be
considered to be unambiguously in the future of \emph{all} potential
observers (except, of course, that without interaction it can never
really be observed).

Now, considerations such as the above may seem purely philosophical.
However, they do suggest the possibility that, perhaps, one could
formulate an interaction theory that starts by considering (so-called)
virtual, off-shell particles to be the ``true'' particles.  On-shell
particles would then be considered a limiting case of particles with
an infinite lifetime and a useful approximation for particles with
finite but ``long'' lifetimes.

The important benefit of this alternative viewpoint is that it
simplifies the treatment of the state space and allows for state
dynamics to be formulated in a way that is closely analogous to
non-relativistic quantum mechanics.  However, dynamic state evolution
could no longer be in terms of time, as it is in non-relativistic
quantum mechanics, because this would be relativistically frame
dependent.  Instead, evolution must be in terms of some additional,
invariant parameter.  But what, then, is this parameter, and how might
relativistic quantum mechanics be formulated in terms of it?

\subsection{Why Parameterized?} \label{sect:why-parameterized}

In non-relativistic quantum mechanics, the Hamiltonian operator $\opH$
determines the time evolution of a (Schr\"o\-dinger picture) wave
function $\psi(t)$ via the Schr\"odinger equation
\begin{equation*} 
    \mi\deriv{\psi(t)}{t} = \opH\psi(t)
\end{equation*}
(here and in the following I take $\hbar = 1$). Importantly, while 
$\opH$ also represents the observable energy of the system, there is 
no ``time operator'' -- time is not an observable in non-relativistic 
quantum mechanics, it is just an evolution parameter.

In relativistic quantum mechanics, the Hamiltonian can also be taken
to be the energy operator and the generator of time translations.
However, this operator is now just the time component $\opP^{0}$ of
the relativistic four-vector momentum operator $\opP$.  And, of
course, such a single component is not Lorentz invariant, so there can
be no relativistically invariant analog of the Schr\"odinger equation
using this definition of the Hamiltonian.  Further, the position
operator $\opX^{0}$ conjugate to $\opP^{0}$ \emph{is} a
(frame-dependent) observable for time, which, therefore, can no longer
be considered to be just an evolution parameter.

On the other hand, one can instead define a truly relativistic
Hamiltonian which, for a free particle, is $\opH = \opP^{2} + m^{2}$.
However, applying this operator to an on-shell particle state $\psi$
then gives $\opH\psi = 0$.  That is, the Hamiltonian vanishes
\emph{identically} when applied to on-shell states.  Particularly when
interactions and gravitation are included in $\opH$, this equation is
known as the \emph{Wheeler-DeWitt equation}, which is fundamental for
quantum gravity and cosmology \cite{dewitt67,hartle83}.

Thus, rather than being a generator of system evolution, the
Hamiltonian acts to impose a constraint on all physical states.  A
Hamiltonian-based mechanics for on-shell relativistic quantum
mechanics therefore requires the use of Dirac's theory of constraints
\cite{dirac50,dirac64}, a powerful but complicating generalization of
standard Hamiltonian mechanics.  And, in general, the Wheeler-DeWitt
equation is not easy to solve.  (For example, see the extensive
discussion in \refcite{rovelli04} on Hamiltonian constraint mechanics
and solving the Wheeler-DeWitt equation in the context of quantum
gravity.)

But now suppose we did \emph{not} require $\psi$ to be on shell.  In
this case we would have $\opH\psi \neq 0$.  Indeed, by analogy with
the non-relativistic case, we could then consider $\opH$ to generate
evolution in a relativistically invariant parameter $\lambda$, such that
\begin{equation} \label{eqn:A2}
    -\mi\deriv{\psi(\lambda)}{\lambda} = \opH\psi(\lambda) \,,
\end{equation}
a clear analog of the non-relativistic Schršdinger equation.  A
\emph{parameterized} formalism for relativistic quantum mechanics is
one that includes an invariant evolution parameter such as $\lambda$.

But what exactly is this evolution parameter?

To see, consider that, for wave functions in the position
representation, such a parameter acts as a fifth argument, in addition
to the usual four position arguments of Minkowski space.  Using the
relativistic Hamiltonian operator in the position representation,
\begin{equation*}
    \opH = -\frac{\partial^{2}}{\partial x^{2}} + m^{2} \,,
\end{equation*}
where $\partial^{2} / \partial x^{2}$ is the D'Alembertian operator
\begin{equation*}
    \frac{\partial^{2}}{\partial x^{2}} \equiv
	- \frac{\partial^{2}}{(\partial x^{0})^{2}}
	+ \frac{\partial^{2}}{(\partial x^{1})^{2}}
        + \frac{\partial^{2}}{(\partial x^{2})^{2}}
	+ \frac{\partial^{2}}{(\partial x^{3})^{2}}
\end{equation*}
(with $c = 1$), the solution to \eqn{eqn:A2} is
\begin{equation*}
    \psixl = \intfour \xz\, \kerneld \psixlz \,.
\end{equation*}
with the \emph{propagation kernel}
\begin{equation} \label{eqn:A3}
    \kerneld = (2\pi)^{-4} \intfour p \, \me^{\mi p\cdot(x - \xz)}
               \me^{-\mi (p^{2} + m^{2})(\lambda-\lambdaz)} \,.
\end{equation}
The wave function $\psixl$ is just the parameterized probability
amplitude function defined by Stueckelberg \cite{stueckelberg41}.  If
we take the \emph{path} of a particle to be a curve in spacetime,
$x^{\mu} = q^{\mu}(\lambda)$, $\mu = 0, 1, 2, 3$, with $\lambda$ as
its \emph{path parameter}, then the $\psixl$ represent the probability
amplitude for a particle to reach position $x$ at the point along its
path with parameter value $\lambda$.

This viewpoint of $\lambda$ as a path parameter is reinforced by the
observation, first made by Feynman \cite{feynman50}, that the
propagation kernel given in \eqn{eqn:A3} can be written in the form
of a spacetime path integral
\begin{equation} \label{eqn:A4}
    \kersym(x-\xz; \lambda_{1}-\lambdaz) =
        \eta \intDfour q\,
            \delta^{4}(q(\lambda_{1}) - x) \delta^{4}(q(\lambdaz) - \xz)
            \exp\left( 
                \mi \int^{\lambda_{1}}_{\lambdaz} \dl L(\qdotsq(\lambda))
            \right)
\end{equation}
for an appropriate normalization constant $\eta$ and the Lagrangian
function
\begin{equation*}
    L(\qdotsq) = \frac{1}{4}\qdotsq - m^{2} \,.
\end{equation*}
In the path integral above, the notation $\Dfour q$ indicates that the
integral is over the four functions $\qmul$ and the delta functions
constrain the starting and ending points of the paths integrated 
over. Further, if we integrate this over paths from $\xz$ to $x$ with 
all possible \emph{intrinsic lengths} $\lambda_{1} - \lambda_{0}$, 
the result is just the Feynman propagator \cite{feynman50,%
teitelboim82,halliwell01b,seidewitz06a},
\begin{equation} \label{eqn:A5}
    \prop = \int_{\lambdaz}^{\infty} \dif \lambda_{1}\, 
            \kersym(x-\xz; \lambda_{1}-\lambdaz) \,.
\end{equation}

Moreover, consider a state $\psi$ that maintains a fixed
three-momentum $\threep$, and integrate it over all paths that end at
some future time $t$.  Then, as shown in \refcite{seidewitz06a}, as $t
\rightarrow \infty$, the energy $p^{0} \rightarrow \sqrt{\threep^{2} +
m^{2}}$.  That is, in this limit, we dynamically recover the
mass-shell constraint $p^{2} = -(p^{0})^{2} + \threevec{p}^{2} =
-m^{2}$, as desired.

Thus, we can think of the $\lambda$ as parameterizing the position of
a particle along its path in spacetime.  Note, however, that such a
parameter is \emph{not} necessarily the proper time for the particle,
as has been sometimes suggested.  Indeed, constraining $\lambda$ to be
the proper time in the paths integrated over in \eqn{eqn:A4} would
mean constraining the particle four-velocity to be timelike, which is
equivalent to re-introducing the mass-shell constraint.  The result
would not then be the relativistic propagation kernel, and the
integral in \eqn{eqn:A5} would not produce the Feynman propagator.
Thus, for an off-shell theory, it is necessary to consider \emph{all}
possible paths, not just ones with timelike four-velocities.

There is actually a long history of approaches using such a fifth
parameter for relativistic quantum mechanics, going back to proposals
in the late thirties and early forties of Fock \cite{fock37} as well
as Stueckelberg \cite{stueckelberg41,stueckelberg42}.  The idea
appeared subsequently in the work of a number of well-known authors,
including Nambu \cite{nambu50}, Feynman \cite{feynman50, feynman51},
Schwinger \cite{schwinger51}, DeWitt-Morette \cite{morette51} and
Cooke \cite{cooke68}.  However, it was not until the seventies and
eighties that the theory was more fully developed, particularly by
Horwitz and Piron \cite{horwitz73, piron78} and Fanchi and Collins
\cite{collins78, fanchi78, fanchi83, fanchi93}, into what has come to
be called \emph{relativistic dynamics}.  The approach is particularly
applicable to the study of quantum gravity and cosmology, in which the
fundamental equations (such as the Wheeler-DeWitt equation) make no
explicit distinction for the time coordinate (see, e.g.,
\refcites{teitelboim82,hartle83,hartle86,hartle95,halliwell01a,%
halliwell01b,halliwell02}).

Extension of relativistic dynamics to a second-quantized QFT has been
somewhat more limited, focusing largely on application to quantum
electrodynamics \cite{fanchi79,saad89,land91,shnerb93,pavsic91,%
horwitz98,pavsic98}.  I have previously proposed a foundational
parameterized formalism for QFT and scattering based on spacetime
paths \cite{seidewitz06a,seidewitz06b,seidewitz09,seidewitz11}, along
the lines that I outlined above.  The purpose of the present paper is
to provide a more formal grounding for parameterized QFT, using an
axiomatic approach.  While the concept of paths will occasionally
still be helpful for intuitive motivation, the theory is presented
here entirely in field-theoretic mathematical language, without the
use of spacetime path integrals.

\subsection{Why Axiomatic?} \label{sect:why-axiomatic}

Beginning in the 1950s and 1960s, a number of researchers attempted to
provide rigorous foundations for QFT. G{\aa}rding and Wightman
introduced a set of precise axioms for QFT on Minkowski space
\cite{wightman59,wightman64}, and Haag and Kastler developed a related
approach to local functions of the field \cite{haag93}.  These
approaches had some early important successes, including modeling
free-field theory, proving that a field theory can be reconstructed
from its vacuum expectation values and rigorously establishing the
link between spin and statistics (see, for example,
\refcites{streater64} and \cite{bogolubov75}).  This early work also
has led to considerable subsequent mathematical work on algebraic,
Euclidean and constructive QFT.

In addition, axiomatic QFT provided the basis for the first rigorous
proof of Haag's theorem.  This theorem states that, under the Wightman
axioms for QFT, any field that is unitarily equivalent to a free field
must itself be a free field \cite{haag55, hall57, streater64}.  This
is troublesome, because the usual Dyson perturbation expansion of the
scattering matrix is based on the interaction picture, in which the
interacting field is presumed to be related to the free field by a
unitary transformation.  And, as Streater and Wightman note
\cite{streater64}, Haag's theorem means that such a picture should not
exist in the presence of actual interaction.

Indeed, after more than 50 years, there is still no known,
non-trivial, interacting QFT in four-dimensional Minkowski space that
satisfies all of the Wightman axioms.  While the existence of such a
theory has not been ruled out (and there has been progress for
simplified cases, such as in fewer dimensions), this situation does
suggest that perhaps the Wightman axioms do not provide the most
useful formal grounding for interacting QFT after all.

In this paper, I propose a set of axioms for parameterized QFT that
are inspired by the Wightman axioms, but differ in some important
ways.  In particular, the underlying states of the theory are not
required to be on shell, which means that the spacetime symmetry group
of a particle field no is no longer restricted to just a single
irreducible representation of the Poincar\'e group.  This also
requires a different treatment of Hamiltonian operators, which act as
generators of parameter evolution (as foreshadowed in
\sect{sect:why-parameterized}), and for which, also, a vacuum state is
only unique relative to a given Hamiltonian.

\Sect{sect:theory} presents the formal axiomatic theory of
parameterized QFT, starting with the axioms for the Hilbert space of
states and for the field operators defined on those states.  Not
surprisingly, it is straightforward to construct a theory of free
fields that meets the given axioms.  However, it turns out that, in 
the
parameterized theory, it is, in fact, possible to also
straightforwardly construct interacting fields that meet the axioms.
The critical result is that the parameterized theory has no analog of
Haag's theorem (for a more in-depth look at why this is so, see also
\refcite{seidewitz15}).  This means that parameterized QFT has a
mathematically consistent interaction picture, allowing for the
construction of an interacting field using a unitary transformation
from a free field, so that it naturally meets the field axioms.

\Sect{sect:application} then takes up the task of applying the theory
to the modeling of interactions and computing scattering amplitudes.
This development is not complete, because it includes only cursory
consideration, at this point, of regularization and renormalization,
which are still required to obtain finite results for scattering
amplitudes.  Nevertheless, it is shown that the parameterized theory
can formally reproduce, term for term, the usual Dyson expansion for a
scattering amplitude obtained using traditional perturbative QFT. And,
since the use of the interaction picture is mathematically consistent
in parameterized QFT, this actually explains how the perturbative
expansions in traditional QFT can produce such empirically accurate
results, even in the face of Haag's theorem.

For simplicity, only massive, scalar fields will be considered in this
paper.  However, the parameterized formalism has been extended to
non-scalar fields in other work (see, for example,
\cite{seidewitz09}), and there is no reason to believe this could not
be incorporated into the axiomatic formalism, much as for traditional
QFT. On the other hand, including massless and gauge fields in the
parameterized axiomatic theory is more difficult and is an important
avenue for future work (see also \refcites{saad89} and \cite{shnerb93}
for some earlier work addressing gauge theory in the context of
parameterized quantum electrodynamics).

\section{Theory} \label{sect:theory}

This section describes the basic theory of parameterized QFT.
\Sects{sect:hilbert} and \ref{sect:fields} present the axioms of
parameterized QFT, generally paralleling the axioms for traditional
QFT as presented in \cite{streater64}.  \Sect{sect:free} then
demonstrates the consistency of the axioms by constructing a theory of
free fields that satisfies them, and \sect{sect:expectation} discusses
general results related to the vacuum expectation values of fields.
\Sect{sect:interacting} shows how the theory can also cover the case
of interacting fields, a topic which will be addressed further in
\sect{sect:application}.  \Sect{sect:integrated} develops the formalism
for integrating fields over the path parameter and
\sect{sect:antiparticles} addresses the related issue of the modeling
of antiparticles.

\subsection{Hilbert Space} \label{sect:hilbert}

As in traditional QFT, the states in the theory, over which fields
will be defined, are given by the unit rays in a separable Hilbert
space $\HilbH$.  For any vectors $\Psi, \Phi \in \HilbH$, the inner
product defined on $\HilbH$ is denoted $(\Psi, \Phi)$.  For all $\Psi
\in \HilbH$, $\Psi^{2} \equiv (\Psi, \Psi)$ must be finite and
non-negative (i.e., vectors are normalizable), and $\Psi^{2} = 0$ if
and only if $\Psi = 0$.

We can introduce the usual Dirac bra and ket notation by first
extending $\HilbH$ to a rigged Hilbert space, or Gel'fand triple
\cite{gelfand55,gelfand64}, $\HilbK \subset \HilbH \subset \HilbK'$.
Here, $\HilbK$ is a nuclear space that is a dense subset of $\HilbH$
\cite{bogolubov75}.  This is the space of kets, $\ket{\Phi} \in
\HilbK$.  $\HilbK'$ is the space of continuous, linear, complex-valued
functionals $F[\Phi]$ on $\HilbK$, known as \emph{distributions}.
This is the space of bras, $\bra{F} \in \HilbK'$.  (For a complete
treatment of bras, kets and Gel'fand triples, see \refcite{pelc97}.)

For any $\bra{F} \in \HilbK'$, the action of $\bra{F}$ on a
ket $\ket{\Phi}$ is denoted by the bracket
\begin{equation*}
    \inner{F}{\Phi} \equiv F[\Phi] \,.
\end{equation*}
For any $\Psi \in \HilbH$, there is a dual bra $\bra{\Psi}$ whose
action is defined by
\begin{equation*}
    \inner{\Psi}{\Phi} = (\Psi, \Phi) \,.
\end{equation*}
It is under this duality mapping that $\HilbH$ may be considered a
subset of $\HilbK'$.

The following axiom establishes the basic assumptions about $\HilbH$,
including its relativistic nature.  The assumed relativistic
transformation law is the same as in traditional relativistic quantum
mechanics, but the axiom does \emph{not} include the usual spectral
conditions on the spacetime translation operator.

\setcounter{axiom}{-1}
\begin{axiom}[Relativistic States] \label{axm:rqm}
    The states of the theory are the unit rays in a separable Hilbert
    space $\HilbH$ on which is defined a unitary representation
    $\{\Delta x, \Lambda\} \rightarrow \opU(\Delta x, \Lambda)$ of the
    Poincar\'e group.
\end{axiom}

Since the spacetime translation operator $\opU(\Delta x, 1)$ is
unitary, it can be written as
\begin{equation*}
    \opU(\Delta x, 1) = \me^{\mi\opP \cdot \Delta x} \,,
\end{equation*}
where $\opP$ is a Hermitian operator (i.e., $\opP = \opP\adj$).  As
usual, we will interpret $\opP$ as the relativistic energy-momentum
operator.  However, as noted above, we will \emph{not} require that
the eigenvalues of $\opP$ be on-shell.  \Sect{sect:application} will
address the emergence of physical on-shell states in the context of
scattering processes.

We will also \emph{not} take $\opP^{0}$, the (frame-dependent) energy
operator and generator of time translations, to be the Hamiltonian
operator.  Instead, we allow the separate identification of one or 
more Hamiltonian operators on $\HilbH$, as long as they meet the 
following definition.

\begin{definition*}[Hamiltonian operator] 
    An operator $\opH$ on $\HilbH$ with the following properties:
    \begin{enumerate}
	\item $\opH$ is Hermitian.
	
	\item $\opH$ commutes with all spacetime transformations
	$\opU(\Delta x, \Lambda)$ (and, hence, with $\opP$).
	
	\item $\opH$ has an eigenstate $\ketvac \in \HilbK \subset
	\HilbH$ such that $\opH\ketvac = 0$, and this is the unique
	null eigenstate for $\opH$, up to a constant phase.
    \end{enumerate}
\end{definition*}

Note that this definition essentially introduces the concept of a
\emph{vacuum state}, but only relative to a choice of Hamiltonian
operator.  Each Hamiltonian operator must have a unique vacuum state,
but different Hamiltonian operators defined on the same Hilbert space
may have different vacuum states.  On the other hand, the Hamiltonian
operator is \emph{not} required to be positive definite, since it is
no longer considered to represent the energy observable, so its vacuum
state is not a ``ground state'' in the traditional sense.  However,
when $\HilbH$ is a Fock space with a particle interpretation, the
vacuum state for an identified Hamiltonian can be considered to be the
``no particle'' state (as will be the case for the free-field theory
defined in \sect{sect:free}).

Rather than generating time translation, a Hamiltonian operator $\opH$
is taken instead to generate evolution in the parameter $\lambda$, as
given by the exponential operator $\me^{-\mi\opH\lambda}$.  Since
$\opH$ commutes with $\opP$, momentum is conserved under evolution in
$\lambda$.  The parameter $\lambda$ itself is \emph{not} considered to
be an observable in the theory, but is, rather, treated as just an
evolution parameter, similarly to time in non-relativistic quantum
mechanics.

By further analogy with non-relativistic theory, we can define both
Schr\"odinger and Heisenberg pictures for evolution in $\lambda$.
That is, a Schr\"odinger-picture state $\Psi_{S}(\lambda)$ evolves in
$\lambda$ and is related to the corresponding Heisenberg-picture state
$\Psi_{H}$ by $\Psi_{S}(\lambda) = \me^{-\mi\opH\lambda} \Psi_{H}$,
while a Heisenberg-picture operator $\opA_{H}(\lambda)$ evolves in
$\lambda$ and is related to the corresponding Schr\"odinger-picture
operator $\opA_{S}$ by $\opA_{H}(\lambda) = \me^{\mi\opH\lambda}
\opA_{S} \me^{-\mi\opH\lambda}$, so that $\bra{\Psi_{H}}
\opA_{H}(\lambda) \ket{\Phi_{H}} = \bra{\Psi_{S}(\lambda)} \opA_{S}
\ket{\Phi_{S}(\lambda)}$, for all $\Psi$, $\Phi$ and $\lambda$.

The evolution of Schr\"odinger-picture states can also be given by the
\emph{Stueckelberg-Schr\"odinger equation}
\begin{equation*}
    \mi\deriv{\Psi_{S}}{\lambda} = \opH\Psi_{S} \,.
\end{equation*}
Similarly, Heisenberg-picture operators evolve according to
\begin{equation*}
    \mi\deriv{\opA_{H}}{\lambda} = [\opA, \opH] \,.
\end{equation*}

The single null eigenstate of a Hamiltonian $\opH$ is a \emph{vacuum
state} for $\HilbH$, invariant in the $\lambda$ evolution generated by
$\opH$.  Such a state must also be invariant under Poincar\'e
transformations, since $\opH\opU(\Delta x, \Lambda)\ketvac =
\opU(\Delta x, \Lambda)\opH\ketvac = 0$, so $\opU(\Delta x,
\Lambda)\ketvac$ must equal $\ketvac$ (up to a phase, which repeated
application shows must be $1$).  Note, however, that $\opH$ is not
prohibited from having other eigenstates that are invariant under
Poincar\'e transformations, but evolve in $\lambda$.

\subsection{Fields} \label{sect:fields}

As in traditional QFT, we can define a \emph{field} as an operator
$\oppsix$ on $\HilbH$, which is a function of spacetime position $x$
(the present paper only covers the case of scalar fields).  For the
parameterized theory, though, we can also consider both
Schr\"odinger-picture and Heisenberg-picture versions of the field
operators, relative to parameter evolution according to a given
Hamiltonian $\opH$.  Taking $\oppsix$ to be a Schr\"odinger-picture
operator, the corresponding Heisenberg-picture operator is
\begin{equation*}
    \oppsixl 
	\equiv \me^{\mi\opH\lambda} \oppsix \me^{-\mi\opH\lambda} \,.
\end{equation*}
That is, in the (parameterized) Heisenberg picture, the field
operators $\oppsixl$ are functions of both the spacetime position
\emph{and} the parameter $\lambda$.

However, fields that are functions of position are generally too
singular to be well-defined operators on $\HilbH$
\cite{streater64,bogolubov75}.  Instead, to obtain mathematically
well-defined entities, one must \emph{smear} the fields with a
\emph{test function} $f(x)$ from the space $\HilbD$ of smooth,
parameterized, complex-valued functions with compact support in
spacetime and a norm
\begin{equation} \label{eqn:B1}
    |f|^{2} \equiv \intfour x\, f\conj(x)f(x)
\end{equation}
that is finite.

So, for any $f \in \HilbD$, formally define the operator-valued
functional
\begin{equation*}
    \oppsi[f] = \intfour x\, f\conj(x) \oppsix
              = \intfour p\, f\conj(p) \oppsi(p)
\end{equation*}
and its adjoint
\begin{equation*}
    \oppsi\adj[f] = \intfour x\, f(x) \oppsit(x)
                  = \intfour p\, f(p) \oppsit(p) \,.
\end{equation*}
The position and momentum representations of the functions $f$ and
fields $\oppsi$ are related by four-dimensional Fourier transforms:
\begin{equation*}
    f(p) = (2\pi)^{-2} \intfour x\, \me^{-\mi p \cdot x} f(x)
\end{equation*}
and
\begin{equation*}
    \oppsi(p) = 
	(2\pi)^{-2} \intfour x\, \me^{-\mi p \cdot x} \oppsix \,.
\end{equation*}
It is the smeared fields $\oppsi[f]$ and $\oppsi\adj[f]$ that are
actually well-defined operators in the theory. 

The Heisenberg-picture versions of the smeared fields are, similarly,
\begin{equation*}
    \param{\oppsi}[f\lambda] 
	= \intfour x\, f\conj(x) \oppsixl
	= \intfour p\, f\conj(p) \oppsil{p}
\end{equation*}
and its adjoint
\begin{equation*}
    \param{\oppsi\adj}[f\lambda] 
	= \intfour x\, f(x) \oppsitl{x}
        = \intfour p\, f(p) \oppsitl{p} \,,
\end{equation*}
for any value of the path parameter $\lambda$.  Note that the
Heisenberg-picture fields are smeared over the spacetime argument, but
not the parameter argument.

\renewcommand{\theaxiom}{\Roman{axiom}}
\begin{axiom}[Domain and Continuity of Fields] \label{axm:domain}
    For all $f \in \HilbD$, the operators $\oppsi[f]$ and their
    adjoints $\oppsit[f]$ are defined on a domain $D$ of states dense
    in $\HilbH$.  The $\opU(\Delta x, \Lambda)$, any Hamiltonian
    $\opH$, $\oppsi[f]$ and $\oppsit[f]$ all carry vectors in $D$
    into vectors in $D$.
\end{axiom}

\noindent 
The domain $D$ always contains the domain $D_{0}$ of states obtained
by applying polynomials in the fields and their adjoints to the vacuum
state of a given Hamiltonian $\opH$. Typically, we will be able to 
assume that $D_{0} = D$.

\begin{axiom}[Field Transformation Law] \label{axm:transform}
    For any Poincar\'e transformation $\{\Delta x, \Lambda\}$, for 
    any $\Psi \in D$,
    \begin{equation*}
	\opU(\Delta x, \Lambda) \oppsi[f] 
	\opU^{-1}(\Delta x, \Lambda) \Psi
	    = \oppsi[\{\Delta x, \Lambda\}f]\Psi \,,
    \end{equation*}
    where
    \begin{equation*}
	(\{\Delta x, \Lambda\}f)(x) \equiv
	    f(\Lambda^{-1}(x - \Delta x)) \,.
    \end{equation*}
\end{axiom}

\noindent 
This field transformation law is consistent with the spacetime
transformation law for states given in \axm{axm:rqm} (note again that
only scalar particles are being considered in the present paper).
While the axiom is given in terms of the Schr\"odinger-picture field,
the commutivity of $\opU(\Delta x, \Lambda)$ and $\opH$ implies that
it also applies to the Heisenberg-picture fields.

For the unsmeared field $\oppsix$, the above transformation law
for a Poincar\'e transformation $\{\Delta x, \Lambda\}$ gives
\begin{equation*}
    \opU(\Delta x, \Lambda) \oppsix \opU^{-1}(\Delta x, \Lambda)
	= \oppsi(\Lambda x + \Delta x) \,.
\end{equation*}
Taking the limit of infinitesimal $\Delta x$ with $\Lambda = 1$ then
gives
\begin{equation*}
    \mi \pderiv{}{x}\oppsi(x) = [\oppsi(x), \opP] \,.
\end{equation*}

\begin{axiom}[Commutation Relations] \label{axm:commutation}
    The field $\oppsi$ and its adjoint satisfy the (bosonic)
    commutation relations, for all $f', f \in \HilbD$ and any $\Psi
    \in D$,
    \begin{equation*}
	[\,\oppsi[f'], \oppsi[f]\,] \Psi 
	    = [\,\oppsi\adj[f'], \oppsit[f]\,] \Psi = 0
    \end{equation*}
    and
    \begin{equation*}
	[\,\oppsi[f'], \oppsit[f]\,] \Psi = (f',f) \Psi \,,
    \end{equation*}
    where
    \begin{equation*}
	(f', f) \equiv \intfour x\, {f'}\conj(x)f(x) \,.
    \end{equation*}
\end{axiom}

\noindent 
Note that $(f', f) = (|f'+f|^{2} - |f'|^{2} - |f|^{2})/2$, and so this
is well-defined if the norm given in \eqn{eqn:B1} is.

In terms of the unsmeared field, \axm{axm:commutation} gives the
four-dimensional commutation relation:
\begin{equation} \label{eqn:B2}
    [\oppsi(x'), \oppsit(x)] = \delta^{4}(x'-x) \,,
\end{equation}
or, for the Heisenberg-picture fields, the \emph{equal-$\lambda$}
commutation relation:
\begin{equation*} 
    [\oppsil{x'}, \oppsitl{x}] = \delta^{4}(x'-x) \,.
\end{equation*}
As a four-dimensional commutation relation, this is stronger than the
usual ``local commutivity'' axiom for traditional fields, which only
requires that fields commute when the positions $x'$ and $x$ are
spacelike.  In contrast, \eqn{eqn:B2} requires that fields commute
for all $x'$ and $x$ other than $x' = x$.

Further, since the Heisenberg-picture field operators are smeared over
the four-position $x$, but not the parameter $\lambda$, there is no
mathematical issue with going from the unsmeared form of the
commutation relations to the more rigorously defined smeared form.
This is in contrast to the equal-time commutation relations commonly
imposed in traditional QFT, which would require fields to make sense
as operators when smeared over only the three-position, an additional
strong, and possibly questionable, assumption (see the discussion in
\refcite{streater64}, p.  101).

\begin{axiom}[Cyclicity of the Vacuum] \label{axm:cyclicity}
    If $\opH$ is a Hamiltonian operator, then its vacuum state
    $\ketvac$ is in the domain $D$ of the field operators, and
    polynomials in the fields $\oppsi[f]$ and their adjoints
    $\oppsit[f]$, when applied to $\ketvac$, yield a set $D_{0}$ of
    states dense in $\HilbH$.
\end{axiom}

An operator $\opA$ constructed as a polynomial in the smeared fields
is a functional 
\begin{equation} \label{eqn:B3}
    \opA[\{f_{i}\}, \{g_{j}\}] = 
	\opA(\{\oppsi[f_{i}]\}, \{\oppsit[g_{j}]\}) \,,
\end{equation}
where the functions $f_{i}$ are arguments of the field operators and
the functions $g_{j}$ are arguments of their adjoints.  Such
polynomial operators clearly form an algebra $\HilbA$ under addition,
multiplication and scalar multiplication.  Further, $\HilbA$ is a
*-algebra, with involution given by the adjoint operation
\begin{equation} \label{eqn:B4}
    \opA\adj[\{g_{j}\}, \{f_{i}\}] \equiv
	(\opA[\{f_{i}\}, \{g_{j}\}])\adj \,.
\end{equation}

\subsection{Free Fields} \label{sect:free}

This section presents a theory of free fields that satisfy the axioms
presented in \sect{sect:fields}.  \emph{Free} fields represent
particles that do not interact.  Therefore, as in traditional QFT, we
can develop a theory of parameterized free fields that act on a Fock
space constructed as the direct sum
\begin{equation*}
    \HilbH = \bigoplus_{N=0}^{\infty}\HilbH^{(N)} \,,
\end{equation*}
where $\HilbH^{(N)}$ is the subspace of those states with exactly $N$ 
particles.

Let $\HilbH^{(N)}$ be the Hilbert space of functions $\Psi(\argNx)$,
symmetric in the interchange of any two arguments (for the case of
bosonic scalar particles considered in this paper), taken from the
space $\HilbL^{2}(R^{4N}, C)$ of complex-valued, square-integrable
functions on $R^{4N}$.  The inner product is given by
\begin{equation*} 
    (\Psi_{1}, \Psi_{2}) \equiv
	\intfour x_{1} \cdots \intfour x_{N}
	    \Psi_{1}(\argNx)\conj\Psi_{2}(\argNx) \,.
\end{equation*}
$N$-particle states are represented by $\Psi \in \HilbH^{(N)}$ that
are normalized such that
\begin{equation*}
    (\Psi, \Psi) = 
	\intfour x_{1} \cdots \intfour x_{N}
	    \Psi(\argNx)\conj\Psi(\argNx) = 1 \,.
\end{equation*}
The spacetime transformation law for these states is
\begin{equation*}
    \opU(\Delta x, \Lambda)\Psi(\argNx) 
	= \{\Delta x, \Lambda\}\Psi(\argNx)
	= \Psi(\Lambda^{-1}(x_{1} - \Delta x), \dots, 
	       \Lambda^{-1}(x_{N} - \Delta x)) \,.
\end{equation*}
The Fock space $\HilbH$ of states of any number of particles then, by
construction, satisfies the assumptions of \axm{axm:rqm}.

Next, consider $\HilbH^{(N)}$ as being in the Gel'fand triple
$\HilbK^{(N)} \subset \HilbH^{(N)} \subset \HilbK^{(N)\prime}$.  Here,
$\HilbK^{(N)}$ is a dense subset of $\HilbH^{(N)}$ consisting of
smooth functions $f(\argNx)$ with compact support, and
$\HilbK^{(N)\prime}$ is the space of continuous, linear,
complex-valued functionals $F[f]$ on $\HilbK^{(N)}$.  $\HilbH$ is then
in the Gel'fand triple $\HilbK \subset \HilbH \subset \HilbK'$, where
$\HilbK = \bigoplus_{N = 0}^{\infty} \HilbK^{(N)}$ and $\HilbK'=
\bigoplus_{N = 0}^{\infty} \HilbK^{(N)\prime}$.

The subspace $\HilbH^{(0)} \subset \HilbH$ contains the single
zero-particle state $\ketvac$, and we require that any \emph{free
Hamiltonian} have this state as its vacuum state.  As for traditional
QFT, we can then define a parameterized field theory of free particles
that satisfies the axioms given in \sect{sect:theory}, by having the
fields build up the Fock space $\HilbH$ from the vacuum state
$\ketvac$.  The fields $\oppsi[f]$ act as particle \emph{destruction}
operators, while their adjoints $\oppsit[f]$ act as particle
\emph{creation} operators.

Specifically, let $\ket{\Psi} \in \HilbK^{(N)}$, for $N > 0$.  Then
free fields are defined such that
\begin{equation*}
    \begin{aligned}
	\oppsi[f]\,\ket{0} &= 0 \,, \\
	\oppsi[f]\,\ket{\Psi}
	    &= \sum_{i=1}^{N}\op{f_{i}}\ket{\Psi}
		\in \HilbK^{(N-1)} \,,
    \end{aligned}
\end{equation*}
where the operators $\op{f}_{i}$ act as
\begin{equation*}
    (\op{f_{i}}\Psi)(\argn{x}{i-1},\argmn{x}{i+1}{N}) =
	\intfour x_{i}\, f\conj(x_{i})
	    \Psi(\argn{x}{i},\ldots,x_{N}) \,,
\end{equation*}
and
\begin{equation*}
    \begin{aligned}
	\oppsit[f]\,\ket{0} &= \ket{f} \in \HilbK^{(1)} \,, \\
	\oppsit[f]\,\ket{\Psi}
	    &= \mathrm{Sym} (\ket{f}\otimes\ket{\Psi})
		\in \HilbK^{(N+1)} \,,
    \end{aligned}
\end{equation*}
where $\mathrm{Sym}(\ldots)$ indicates symmetrization on function 
arguments.

In this way, we can build up any element of $\HilbK^{(N)}$, for any $N
> 0$, and therefore $\HilbK$, by applying a polynomial in
$\oppsit[f]$.  We can thus take $D = D_{0} = \HilbK$ and satisfy
\axms{axm:domain} and \ref{axm:cyclicity}.  Further, inserting fields
transformed under $\opU(\Delta x, \Lambda)$, it is clear that these
fields also satisfy \axm{axm:transform}.  Finally,
\begin{equation*}
    \begin{split}
	\oppsi[f']\oppsit[f]\,\ket{\Psi}
	    &= \sum_{i=1}^{N}\op{f'_{i}}\,
		\mathrm{Sym} (\ket{f}\otimes\ket{\Psi}) \\
	    &= (f',f)\ket{\Psi} +
	        \mathrm{Sym}(\ket{f} \otimes 
		    \sum_{i=1}^{N-1}
			\op{f'_{i}}\ket{\Psi}) \\
	    &= (f',f)\ket{\Psi} +
		\oppsit[f]\oppsi[f']\,\ket{\Psi} \,,
    \end{split}
\end{equation*}
so 
\begin{equation*}
    [\,\oppsi[f'], \oppsit[f]\,] \ket{\psi} 
	= (f',f) \ket{\psi} \,,
\end{equation*}
satisfying \axm{axm:commutation}.

\subsection{Vacuum Expectation Values} \label{sect:expectation}

One of the central results of traditional axiomatic QFT is that a
field theory can be reconstructed from its vacuum expectation values,
and this result can largely be carried over to parametrized QFT. To do
so, let $\psi[f]$ be a field defined on a Hilbert space $\HilbH$ and
$\opH$ be a Hamiltonian with vacuum state $\ketvac \in \HilbH$.  Then
the vacuum expectation values for the field are given by the 
\emph{Wightman distributions}
\begin{equation*}
    \begin{split}
	\fnW_{(m,n)}(\argn{f}{m}; \argn{g}{n})
	    &\equiv \bravac \oppsi[f_{1}]\cdots\oppsi[f_{m}]
		    \oppsit[g_{1}]\cdots\oppsit[g_{n}]\ketvac \\
	    &=      \bravac \param{\oppsi}[{f_{1}}\lambda]\cdots
	                    \param{\oppsi}[{f_{m}}\lambda]
		            \param{\oppsit}[{g_{1}}\lambda]\cdots
	                    \param{\oppsit}[{g_{m}}\lambda]\ketvac \,,
    \end{split}
\end{equation*}
for any value of $\lambda$.  Note that the definitions here have been
adapted for non-Hermitian, scalar field operators.

These distributions have the following properties.

\begin{enumerate}
    \item \emph{Relativistic Invariance.} For any Poincar\'e 
    transformation $\{\Delta x, \Lambda\}$,
    \begin{equation*}
	\fnW_{(m,n)}[\argn{f}{m}; \argn{g}{n}] \\
	    = \fnW_{(m,n)}[\{\Delta x, \Lambda\}f_{1}, \dots,
	                   \{\Delta x, \Lambda\}f_{m};
	                   \{\Delta x, \Lambda\}g_{1}, \dots,
	                   \{\Delta x, \Lambda\}g_{n}] \,.
    \end{equation*}
    \item \emph{Hermiticity.}
    \begin{equation*}
	\fnW_{(m,n)}[\argn{f}{m}; \argn{g}{n}]
	    = \fnW\conj_{(n,m)}[\argmn{g}{n}{1}; \argmn{f}{m}{1}] \,.
    \end{equation*}
    \item \emph{Commutivity.} For any $j$ or $k$,
    \begin{equation*}
	\begin{split}
	    \fnW_{(m,n)}[\argn{f}{m}; \argn{g}{n}]
		&= \fnW_{(m,n)}[\argn{f}{j+1}, \argmn{f}{j}{n};
		                \argn{g}{n}) \\
		&= \fnW_{(m,n)}[\argn{f}{m}; 
				\argn{g}{k+1}, \argmn{g}{k}{m}]
	\end{split}
    \end{equation*}
    and, if $(f_{m}, g_{1}) = 0$,
    \begin{equation*}
	\fnW_{(m,n)}[\argn{f}{m}; \argn{g}{n}]
	    = \fnW_{(m,n)}(\argn{f}{m-1}, g_{1}; 
			   f_{m}, \argmn{g}{2}{n}] \,.
    \end{equation*}
    \item \emph{Positive Definiteness.} For any sequence of test
    functions $\{f_{11}, f_{21}, f_{22}, f_{31}, \dots\}$ and each
    integer $n \geq 0$,
    \begin{equation*}
	\sum_{j = 0, k = 0}^{n} 
	    \fnW_{(j,k)}[f\conj_{jj}, \dots, f\conj_{j1}; 
	                 f_{k1}, \ldots, f_{kk}]
	    \geq 0 \,.
    \end{equation*}
\end{enumerate}

These are similar properties to those for the Wightman distributions
in traditional QFT \cite{streater64,bogolubov75}, except that local
commutivity has been replaced with a stronger commutivity condition
and there are no spectral conditions.  Therefore, an analogous
Reconstruction Theorem still holds, in the following form.

\begin{theorem}[Reconstruction Theorem] \label{thm:reconstruction}
    Let $\{\fnW_{(m,n)}[\argn{f}{m}; \argn{g}{n}]\}$, for $m,n = 1, 2,
    \dots$, be a sequence of distributions, depending on the $m+n$
    test functions $\argn{f}{m}$ and $\argn{g}{n}$, satisfying
    properties 1 to 4 for Wightman distributions.  Then there exists a
    separable Hilbert space $\HilbH$, a continuous unitary
    representation $\opU(\Delta x, \Lambda)$ of the Poincar\'e group
    on $\HilbH$, a state $\ketvac \in \HilbH$ that is invariant under
    $\opU(\Delta x, \Lambda)$ and a field $\oppsi[f]$ on $\HilbH$ with
    domain $D$, such that
    \begin{equation*}
	\bravac \oppsi[f_{1}]\cdots\oppsi[f_{m}]
	        \oppsit[g_{1}]\cdots\oppsit[g_{n}]\ketvac
		= \fnW_{(m,n)}[\argn{f}{m}; \argn{g}{n}] \,.
    \end{equation*}
    Further, if $\HilbH'$ is a Hilbert space, $\opU'(\Delta x,
    \Lambda)$ is a continuous representation of the Poincar\'e group
    on $\HilbH'$, $\ket{0'} \in \HilbH'$ is invariant under
    $\opU'(\Delta x, \Lambda)$ and $\oppsi'[x]$ is a field on 
    $\HilbH'$ with domain $D'$ and the property
    \begin{equation*}
	\bra{0'} \oppsi'[f_{1}]\cdots\oppsi'[f_{m}]
	         \oppsi'{}\adj[g_{1}]\cdots\oppsi'{}\adj[g_{n}]\ket{0'}
		= \fnW_{(m,n)}[\argn{f}{m}; \argn{g}{n}] \,,
    \end{equation*}
    then there exists a unitary transformation $\opG$ from $\HilbH$ 
    onto $\HilbH'$, such that
    \begin{equation*}
	\begin{split}
	    \opU'(\Delta x, \Lambda) 
		&= \opG\opU(\Delta x, \Lambda)\opG^{-1} \,, \\
	    \oppsi'[f] &= \opG\oppsi[f]\opG^{-1} \,, \\
	    \ket{0'} &= \opG\ketvac \textrm{ and} \\
	    D' &= \opG D \,.
	\end{split}
    \end{equation*}
\end{theorem}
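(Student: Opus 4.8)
The plan is to run a GNS-type (Borchers--Wightman) reconstruction, adapted to the non-Hermitian, parameterized setting. Because no spectral conditions are imposed, this is actually cleaner than the traditional argument: properties 1--4 are exactly what is needed to manufacture the Hilbert space, the representation, the vacuum, and the field, with no analyticity or momentum-support analysis. First I would build a pre-Hilbert space out of the test functions themselves. Let $\mathcal{V}_{0}$ be the free vector space spanned by formal symbols $[g_{1}, \ldots, g_{n}]$, one for each finite sequence of functions $g_{i} \in \HilbD$ (including the empty symbol), taken multilinear in the $g_{i}$; the symbol $[g_{1}, \ldots, g_{n}]$ is intended to represent $\oppsit[g_{1}]\cdots\oppsit[g_{n}]\ketvac$, and the empty symbol represents $\ketvac$. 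I then define a sesquilinear form
\[
    \bigl\langle [g'_{1}, \ldots, g'_{n'}],\, [g_{1}, \ldots, g_{n}] \bigr\rangle
        \equiv \fnW_{(n',n)}[g'_{n'}, \ldots, g'_{1};\, \argn{g}{n}] \,.
\]
Hermiticity (property 2) makes this form conjugate-symmetric, and positive definiteness (property 4) is precisely the statement that $\langle \Phi, \Phi \rangle \geq 0$ for every $\Phi \in \mathcal{V}_{0}$. Cauchy--Schwarz then shows the null set $\mathcal{N} = \{\Phi : \langle\Phi,\Phi\rangle = 0\}$ is a subspace, so the form descends to a genuine inner product on $\mathcal{V}_{0}/\mathcal{N}$; completing gives a separable $\HilbH$ (separability inherited from $\HilbD$), with $\ketvac$ the class of the empty symbol.

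Next I would define the operators. The creation field $\oppsit[g]$ acts by prepending, $[g_{1}, \ldots, g_{n}] \mapsto [g, g_{1}, \ldots, g_{n}]$; Cauchy--Schwarz for the semidefinite form shows it carries $\mathcal{N}$ into $\mathcal{N}$, so it descends to the quotient. The destruction field $\oppsi[f]$ is defined as its formal adjoint: its matrix elements $\langle [g'_{1},\ldots],\, \oppsi[f][g_{1},\ldots]\rangle = \langle \oppsit[f][g'_{1},\ldots],\, [g_{1},\ldots]\rangle$ are again among the given $\fnW$ (now carrying an extra annihilation slot), hence well defined. The Poincar\'e representation is defined on the labels, $\opU(\Delta x, \Lambda)[g_{1}, \ldots, g_{n}] = [\{\Delta x, \Lambda\}g_{1}, \ldots, \{\Delta x, \Lambda\}g_{n}]$; relativistic invariance (property 1) makes this isometric, so it descends to a unitary representation on $\HilbH$ (continuous in $\{\Delta x, \Lambda\}$ since the $\fnW$ are continuous in the test-function topology), and it fixes $\ketvac$ by construction. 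The target expectation values then hold essentially by definition, since moving the annihilation operators onto the bra gives
\[
    \bravac \oppsi[f_{1}]\cdots\oppsi[f_{m}]
            \oppsit[g_{1}]\cdots\oppsit[g_{n}]\ketvac
        = \bigl\langle [f_{m}, \ldots, f_{1}],\, [\argn{g}{n}] \bigr\rangle
        = \fnW_{(m,n)}[\argn{f}{m}; \argn{g}{n}] \,.
\]

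For the uniqueness clause, given a second realization $(\HilbH', \opU', \ket{0'}, \oppsi')$ producing the same expectation values, I would define $\opG$ on the dense span of the $\oppsit[g_{1}]\cdots\oppsit[g_{n}]\ketvac$ by $\opG\bigl(\oppsit[g_{1}]\cdots\oppsit[g_{n}]\ketvac\bigr) = \oppsi'{}\adj[g_{1}]\cdots\oppsi'{}\adj[g_{n}]\ket{0'}$. Since corresponding inner products on the two sides are both equal to the same $\fnW_{(n',n)}$, the map $\opG$ is isometric (in particular well defined, as it sends null vectors to null vectors) with dense range, and so extends to a unitary from $\HilbH$ onto $\HilbH'$. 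That $\opG$ intertwines the fields and the representations, sends $\ketvac$ to $\ket{0'}$, and carries $D$ onto $D'$ then follows by comparing the two constructions term by term, each identity being checked on the dense generating set.

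The step I expect to be the main obstacle is the well-definedness and domain bookkeeping for the \emph{unbounded} operators on the quotient, rather than any of the algebraic identities. Concretely, one must exhibit a single dense domain $D$ (containing $D_{0}$ and the vacua of the relevant Hamiltonians) that is carried into itself by $\oppsi[f]$, by $\oppsit[f]$, and by every $\opU(\Delta x, \Lambda)$, and on which $\oppsi[f]$ and $\oppsit[f]$ are genuinely mutually adjoint; the delicate point is that $\oppsi[f]$ does not act by a simple prepending, so showing its range lands back in an invariant domain requires combining the Cauchy--Schwarz bound with the covariance (property 1) and commutivity (property 3) relations. Establishing invariance of $\mathcal{N}$ under all of these operators is what guarantees they pass to the completion as the claimed field and representation, and it is here that property 3 does its real work, ensuring the matrix elements generated by $\oppsi[f]$ remain consistent with the given $\fnW$.
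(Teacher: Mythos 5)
Your overall strategy is exactly the one the paper intends: the paper offers no independent argument here, stating only that the proof ``is essentially the same as for traditional QFT'' and citing Streater--Wightman and Bogolubov et al., with the single substantive addition that, absent spectral conditions, cluster decomposition is unavailable and so uniqueness of the vacuum is not claimed. Your GNS/Borchers-style reconstruction is that standard proof, and your remark that the argument is if anything cleaner without the momentum-support analysis matches the paper's framing. (You do not discuss vacuum uniqueness, but you do not need to, since the theorem does not assert it.)

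The one place where your specific set-up diverges from the textbook construction is also where your acknowledged ``main obstacle'' lives, and I think it is a genuine gap rather than bookkeeping. You build the pre-Hilbert space only from pure-creation symbols $[g_{1}, \ldots, g_{n}]$ and then define $\oppsi[f]$ as a formal adjoint through its matrix elements. But properties 1--4 do not force the one-sided distributions $\fnW_{(m,0)}$ to vanish for $m \geq 1$ (shifting a free field by a Poincar\'e-invariant c-number functional preserves all four properties), so $\oppsi[f]\ketvac$ need not be zero, the span of creation symbols need not be invariant under $\oppsi[f]$, and a weakly defined $\oppsi[f]$ is a priori only a sesquilinear form rather than an operator with range in a common invariant domain $D$; Cauchy--Schwarz does not rescue this, because $\oppsit[f]$ is unbounded, so the functional $\Phi' \mapsto \langle \oppsit[f]\Phi', \Psi\rangle$ need not be representable by a vector. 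The standard repair is to take the Borchers algebra of arbitrary words in two letters (one slot type for $\oppsi$, one for $\oppsit$), so that both the field and its special role as adjoint act by concatenation and domain invariance is automatic; the price is that one must first use the commutivity property to extend the given normal-ordered $\fnW_{(m,n)}$ consistently to expectation values of arbitrarily ordered words. That is where property 3 ``does its real work,'' as you anticipate --- but at the level of defining the inner product on the algebra, not of controlling the range of an already-constructed $\oppsi[f]$.
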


The proof of this theorem is essentially the same as for traditional
QFT \cite{streater64,bogolubov75}.  However, the spectral conditions
required on traditional Wightman functions are necessary for proving
the cluster decomposition property for those functions, which is in
turn used to prove the uniqueness of the constructed vacuum state
\cite{streater64}.  Since the Wightman functions for the parameterized
theory do not have spectral restrictions, the Reconstruction Theorem
as stated above does not claim that $\ketvac$ is necessarily unique.

However, suppose that $\HilbH$ and $\HilbH'$ are Hilbert spaces
related by a unitary transformation $\opG$, with corresponding vacuum
states $\ketvac$ and $\ket{0'}$ and fields $\oppsi$ and $\oppsi'$, as
in the second part of the Reconstruction Theorem.  Suppose further
that $\opH$ is a Hamiltonian defined on $\HilbH$ with $\ketvac$ as its
unique vacuum state.  The corresponding Hamiltonian operator in
$\HilbH'$ is $\opH' = \opG\opH\opG^{-1}$.  Then $\opH'\ket{0'} =
\opH'\opG\ketvac = \opG\opH\ketvac = 0$, and, since $\opH'$ has the
same spectrum as $\opH$, $\ket{0'}$ must be the unique null eigenstate
of $\opH'$.

In addition, if $\param{\oppsi}[f\lambda]$ is the Heisenberg-picture
version of $\oppsi[f]$, using the Hamiltonian $\opH$, then the
Heisenberg-picture for $\oppsi'[f]$, using $\opH'$, is
\begin{equation*}
    \begin{split}
	\param{\oppsi'}[f\lambda]
	    &= \me^{\mi\opH'\lambda}\oppsi'[f]
	       \me^{-\mi\opH'\lambda} \\
	    &= \me^{\mi\opH'\lambda}
	       \opG\me^{-\mi\opH\lambda}
	       \param{\oppsi}[f\lambda]
	       \me^{\mi\opH\lambda}\opG^{-1}
	       \me^{-\mi\opH'\lambda} \\ 
	    &= \opG\param{\oppsi}[f\lambda]\opG^{-1} \,,
    \end{split}
\end{equation*}
where the last equality follows from $\opH'\opG = \opG\opH$. So the 
Heisenberg-picture forms of the fields are also unitarily related by 
$\opG$.

\subsection{Interacting Fields} \label{sect:interacting}

Let the field $\oppsi$ and Hamiltonian $\opH$ be unitarily related to
the field $\oppsi'$ and Hamiltonian $\opH'$, as discussed in
\sect{sect:expectation}.  Now, however, suppose that the fields are
both defined on the \emph{same} Hilbert space $\HilbH$.  Then the
unitary transformation $\opG$ maps $\HilbH$ onto $\HilbH$ and, for all
$\Psi \in \HilbH$, it is also the case that $\Psi' = \opG\Psi \in
\HilbH$. In the Schr\"odinger picture, $\Psi_{S}(\lambda)$ evolves 
according to $\opH$, while $\Psi'_{S}(\lambda)$ evolves according to 
$\opH'$. Nevertheless,
\begin{equation*}
    \Psi'_{S}(\lambda)
	= \me^{-\mi\opH'\lambda}\Psi'_{H}
	= \me^{-\mi\opH'\lambda}\opG\Psi_{H}
	= \opG\me^{-\mi\opH\lambda}\Psi_{H}
	= \opG\Psi_{S}(\lambda) \,,
\end{equation*}
since $\opH'\opG = \opG\opH$.

Under the unitary transformation $\opG$, an operator $\opA$ maps to
$\opA' = \opG \opA \opG^{-1}$.  Now, in the Heisenberg picture, $\opA$
evolves according to the Hamiltonian $\opH$, $\opA_{H}(\lambda) =
\me^{\mi\opH\lambda} \opA_{S} \me^{-\mi\opH\lambda}$, while $\opA'$
evolves according to $\opH'$, $\opA'_{H}(\lambda) =
\me^{\mi\opH'\lambda} \opA'_{S} \me^{-\mi\opH'\lambda}$.  However,
since $\opA'$ is also an operator on $\HilbH$, it can equally well be
evolved using $\opH$.  This gives the \emph{interaction picture} form
for $\opA'$:
\begin{equation*}
    \opA'_{I}(\lambda) 
	= \me^{\mi\opH\lambda} \opA'_{S} \me^{-\mi\opH\lambda}
        = \opG(\lambda) \opA_{H}(\lambda) \opG^{-1}(\lambda) \,,
\end{equation*}
where $\opG(\lambda) = \me^{\mi\opH\lambda} \opG
\me^{-\mi\opH\lambda}$.

The corresponding interaction-picture form for a state $\Psi'$ must
then be related to its Schr\"odinger-picture form by
\begin{equation*}
    \Psi'_{I}(\lambda) 
	= \me^{\mi\opH\lambda} \Psi'_{S}(\lambda) 
	= \opG(\lambda) \Psi_{H}\,,
\end{equation*}
so that
\begin{equation*}
    \bra{\Psi'_{I}(\lambda)}\opA'_{I}(\lambda)\ket{\Phi'_{I}(\lambda)}
	= \bra{\Psi'_{S}(\lambda)}\opA'_{S}\ket{\Phi'_{S}(\lambda)}
	= \bra{\Psi'_{H}}\opA'_{H}(\lambda)\ket{\Phi'_{H}} \,,
\end{equation*}
for all $\Psi'$, $\Phi'$ and $\lambda$.  This also allows
Heisenberg-picture states constructed using the field $\oppsi$ to be
compared to interaction-picture states constructed using the field
$\oppsi'$, such that
\begin{equation*}
    \inner{\op{\Psi}_{H}}{\op{\Phi}'_{I}(\lambda)}
	= \bra{\op{\Psi}_{H}} \opG(\lambda) \ket{\op{\Phi}_{H}} 
	= \bra{\op{\Psi}_{S}(\lambda)}\opG\ket{\op{\Phi}_{S}(\lambda)} 
	= \inner{\opPsi_{S}(\lambda)}{\op{\Phi}'_{S}(\lambda)} \,,
\end{equation*}
for any $\lambda$.

To see why the name ``interaction picture'' is applicable here, 
consider the interaction-picture form of the field $\oppsi'[f]$:
\begin{equation*}
    \param{\oppsi'_{I}}[f\lambda]
	= \me^{\mi\opH\lambda} \oppsi'[f] \me^{-\mi\opH\lambda}
	= \opG(\lambda) \param{\oppsi}[f\lambda] 
	  \opG^{-1}(\lambda) \,.
\end{equation*}
Then
\begin{equation} \label{eqn:B5}
    \begin{split}
	\pderiv{\param{\oppsi'_{I}}[f\lambda]}{\lambda}
	    &= \pderiv{\opG(\lambda)}{\lambda} \param{\oppsi}[f\lambda] 
	       \opG^{-1}(\lambda)
	       + \opG(\lambda) 
	         \pderiv{\param{\oppsi}[f\lambda]}{\lambda} 
	         \opG^{-1}(\lambda) 
	       - \opG(\lambda) \param{\oppsi}[f\lambda] 
	         \opG^{-1}(\lambda)
	         \pderiv{\opG(\lambda)}{\lambda}\opG^{-1}(\lambda) \\
	    &= \pderiv{\opG(\lambda)}{\lambda} \opG^{-1}(\lambda) 
	       \param{\oppsi'_{I}}[f\lambda] 
	       + \mi\opG(\lambda)[\opH, \param{\oppsi}[f\lambda]] 
	         \opG^{-1}(\lambda)
	       - \param{\oppsi'_{I}}[f\lambda]
	         \pderiv{\opG(\lambda)}{\lambda}\opG^{-1}(\lambda) \\
	    &= \mi[\opH'(\lambda), \param{\oppsi'_{I}}[f\lambda]]
	       + [\pderiv{\opG(\lambda)}{\lambda}\opG^{-1}(\lambda),
	          \param{\oppsi'_{I}}[f\lambda]] \,,
    \end{split}
\end{equation}
where 
\begin{equation*}
    \opH'(\lambda) 
	= \me^{\mi\opH\lambda} \opH' \me^{-\mi\opH\lambda}
	= \opG(\lambda) \opH \opG^{-1}(\lambda) \,.
\end{equation*}
But, since $\param{\oppsi'_{I}}[f\lambda]$ evolves according to $\opH$,
\begin{equation*}
    \pderiv{\param{\oppsi'_{I}}[f\lambda]}{\lambda}
	= \mi[\opH, \param{\oppsi'_{I}}[f\lambda]] \,,
\end{equation*}
so we can take
\begin{equation*}
    \opH'(\lambda) = \opH + \Delta \opH(\lambda) \,,
\end{equation*}
where
\begin{equation} \label{eqn:B6}
    \Delta\opH(\lambda) 
	= \mi\pderiv{\opG(\lambda)}{\lambda}\opG^{-1}(\lambda) \,.
\end{equation}

Now, suppose that $\oppsi[f]$ is a free field, as in \sect{sect:free},
and that $\opH$ is the free, relativistic Hamiltonian such that
\begin{equation*}
    [\param{\oppsi}[f\lambda], \opH] 
	= (\opP^{2} + m^{2})\param{\oppsi}[f\lambda] \,,
\end{equation*}
where $m$ is the free-particle mass. Then the field equation for 
$\param{\oppsi}[f\lambda]$ is
\begin{equation*}
    \mi \pderiv{\param{\oppsi}[f\lambda]}{\lambda} 
	= (\opP^{2} + m^{2})\param{\oppsi}[f\lambda] \,.
\end{equation*}
However, since $\opG$ and $\opH$ commute with $\opP$, it is also the 
case that
\begin{equation*}
    [\param{\oppsi'_{I}}[f\lambda], \opH'(\lambda)]
	= (\opP^{2} + m^{2})\param{\oppsi'_{I}}[f\lambda] \,.
\end{equation*}
Substituting this into \eqn{eqn:B5} then gives
\begin{equation*}
    \mi\pderiv{\param{\oppsi'_{I}}[f\lambda]}{\lambda}
	= (\opP^{2} + m^{2})\param{\oppsi'_{I}}[f\lambda]
	  + [\Delta\opH(\lambda), \param{\oppsi'_{I}}[f\lambda]] \,.
\end{equation*}
If we take this as the field equation for $\param{\oppsi'_{I}}
[f\lambda]$, then we can consider $\oppsi'_{I}$ to be an
\emph{interacting} field, with the interaction Hamiltonian
$\Delta\opH(\lambda)$.

In traditional QFT, Haag's Theorem essentially disallows the use of
the interaction picture to relate an interacting field to a free
field, stating that any field unitarily related to a free field must
itself be a free field \cite{haag55,hall57,streater64}.  The
traditional interaction picture is relative to evolution in time, and
any two fields unitarily equivalent at any one time will necessarily
have the same vacuum expectation values at that time.  Haag's Theorem
is then proved by showing that Lorentz invariance requires that fields
with equal same-time vacuum expectation values will also have equal
different-time vacuum expectation values and that this implies that if
one of the fields is free the other one must be too.

For parameterized QFT, however, the evolution parameter is not one of
the spacetime coordinates and, thus, parameter evolution is not
related to spacetime transformations.  It is therefore possible to
have a field such as $\param{\oppsi'_{I}}[f\lambda]$ that is unitarily
related to a free field $\param{\oppsi}[f\lambda]$ for each value of
$\lambda$ but with a \emph{different} transformation $\opG(\lambda)$
for each $\lambda$.  In traditional QFT, Lorentz covariance prohibits
having an analogous unitary transformation that depends on time. (For 
a more detailed discussion of Haag's Theorem and parameterized QFT, 
see \refcite{seidewitz15}.)

Note also that the $\lambda$-dependent transformation $\opG(\lambda)$
essentially induces a different effective vacuum state $\ketl{0'} =
\opG(\lambda)\ketvac = \me^{\mi\opH\lambda}\ket{0'}$ for each value of
$\lambda$.  The two-point equal-$\lambda$ vacuum expectation value for
$\param{\oppsi'_{I}} [f\lambda]$, for example, is thus
\begin{equation*}
    \bral{0'}\param{\oppsi'_{I}}[f\lambda]
             \param{\oppsi'_{I}{}\adj}[g\lambda] \ketl{0'}
	= \bravac\param{\oppsi}[f\lambda]
	         \param{\oppsit}[g\lambda] \ketvac \,.
\end{equation*}
When constructing the vacuum expectation values for
$\param{\oppsi'_{I}} [f\lambda]$ across different $\lambda$ values,
though, one must be careful to use the appropriate effective vacuum
for each value.  For example, the two point different-$\lambda$
expectation value is
\begin{equation*}
    \bralp{0'}\param{\oppsi'_{I}}[f\lambda']
             \param{\oppsi'_{I}{}\adj}[g\lambda] \ketl{0'}
	= \bravac\param{\oppsi}[f\lambda'] \opG^{-1}(\lambda')
	         \opG(\lambda) \param{\oppsit}[g\lambda] \ketvac
	\ne \bravac\param{\oppsi}[f\lambda']
	         \param{\oppsit}[g\lambda] \ketvac \,.
\end{equation*}

The effective vacuum $\ketl{0'}$ is a null eigenstate of the effective
Hamiltonian $\opH'(\lambda)$ for each specific value of $\lambda$.
However, it is \emph{not} an eigenstate of $\opH$ for any $\lambda$.
Considered as a function of $\lambda$, $\ketl{0'}$ is therefore an
example of a state that is invariant under Poincar\'e transformations,
but evolves in $\lambda$, as permitted in the parameterized theory.

\subsection{Integrated Fields} \label{sect:integrated}

In general, the two-point vacuum expectation value
\begin{equation} \label{eqn:B7}
    \kersym[f, g; \lambda - \lambdaz] \equiv
	\bravac \param{\oppsi}[f\lambda] 
	       \param{\oppsit}[g\lambdaz] \ketvac
\end{equation}
represents the \emph{propagation} of a particle from parameter value
$\lambdaz$ to $\lambda$.  The quantity $\kersym[f, g; \lambda -
\lambdaz]$ clearly depends only on the difference $\lambda -
\lambdaz$, since applying the same unitary parameter translation to
the field operator at each parameter value leaves the vacuum
expectation value unchanged.  Intuitively, such an expectation value
represents the propagation of a particle over any path of intrinsic
length $\lambda - \lambdaz$ (as similarly described in the
introductory discussion in \sect{sect:why-parameterized}).
Integrating over all intrinsic lengths then gives the amplitude for
propagation over all possible paths:
\begin{equation*}
    \propsym[f, g] 
	\equiv \int_{0}^{\infty} \dl\, \kersym[f, g; \lambda] \,.
\end{equation*}

For this purpose, it is convenient to introduce the \emph{integrated}
particle fields
\begin{equation} \label{eqn:B8}
    \oppsill[f] \equiv \int_{\lambdaz}^{\infty} \dl\, 
			    \param{\oppsi}[f\lambda] \,,
\end{equation}
such that
\begin{equation*}
    \bravac\oppsill[f]\param{\oppsit}[g\lambdaz]\ketvac
	= \bravac \int_{\lambdaz}^{\infty} \dl\, 
	          \param{\oppsi}[f\lambda]
		  \param{\oppsit}[g\lambdaz] \ketvac 
        = \propsym[f,g] \,.
\end{equation*}
for all $\lambdaz$.  Note that the integrated field operator stills
depends on a parameter value $\lambdaz$, even though $\propsym[f,g]$
is independent of it.  This reflects the fact that, even when
integrating over intrinsic path lengths, the freedom remains to
arbitrarily choose the parameter value at the start of the paths being
integrated along.  

Parameter translation is, in fact, a global symmetry of the integrated
field, so such a field has the eleven-dimensional symmetry group
$\GroupP \otimes U(1)$, where $\GroupP$ is the ten-dimensional
(inhomogeneous) Poincar\'e group.  This is just the \emph{Four-Space
Formalism} (FSF) group described in \refcite{fanchi79}.  (However, it
is different than the $\GroupSO(3,2)$ and $\GroupSO(4,1)$ groups
discussed in \refcites{saad89} and \cite{shnerb93}.)

The following theorems follow directly from the definition of the
integrated field and the axioms for the corresponding unintegrated
field.

\begin{theorem} \label{thm:integrated-1}
    For each parameter value $\lambdaz$ and all $f \in \HilbD$, the
    operator $\oppsill[f]$ is defined on a domain $K'$ of states dense
    in $\HilbK'$ containing the vacuum state $\bra{0}$.  It carries
    vectors in $K'$ into vectors in $K'$.
\end{theorem}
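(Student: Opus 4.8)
The plan is to reduce the statement to a single convergence question. For each fixed $\lambda$, \axm{axm:domain} already guarantees that the Heisenberg field $\param{\oppsi}[f\lambda]=\me^{\mi\opH\lambda}\oppsi[f]\me^{-\mi\opH\lambda}$ is a well-defined operator carrying the dense domain $D$ into itself (using that $\opH$, and hence $\me^{\pm\mi\opH\lambda}$, carries $D$ into $D$). The only genuinely new feature of the integrated field \eqref{eqn:B8} is therefore the integral $\int_{\lambdaz}^{\infty}\dl$, and I would show that this integral fails to converge on kets but does converge once one passes to the dual space $\HilbK'$ of distributions.

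First I would fix the action on bras by duality. For $\bra{F}\in\HilbK'$ set
\begin{equation*}
    \bra{F}\oppsill[f]
        = \int_{\lambdaz}^{\infty}\dl\,\bra{F}\param{\oppsi}[f\lambda] \,,
\end{equation*}
where $\bra{F}\param{\oppsi}[f\lambda]$ is the transpose of the finite-$\lambda$ field, which is weak-* continuous on $\HilbK'$ by \axm{axm:domain}. I would then take the domain $K'$ to be the linear span generated by applying polynomials in the $\oppsill[f_{i}]$ to the vacuum bra $\bra{0}$. With this choice $\bra{0}\in K'$ trivially, and $\oppsill[f]$ carries $K'$ into $K'$ by construction; what remains is to show that the defining integral actually converges on $K'$ and that $K'$ is dense.

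The substantive step is the convergence, and this is exactly where $\HilbK'$ (rather than $\HilbK$) is forced on us. In the momentum representation the free Heisenberg field is $\oppsil{p}=\me^{-\mi(p^{2}+m^{2})\lambda}\oppsi(p)$, so that for $\ket{\Phi}\in\HilbK$
\begin{equation*}
    \bra{F}\oppsill[f]\ket{\Phi}
        = \intfour p\, f\conj(p)\,
          \left(\int_{\lambdaz}^{\infty}\dl\,
                \me^{-\mi(p^{2}+m^{2})\lambda}\right)
          \bra{F}\oppsi(p)\ket{\Phi} \,.
\end{equation*}
The inner integral is purely oscillatory and does not converge absolutely; it converges only as a tempered distribution in $p$, equal to $\me^{-\mi(p^{2}+m^{2})\lambdaz}/[\mi(p^{2}+m^{2}-\mi\varepsilon)]$, i.e.\ with precisely the Feynman $\mi\varepsilon$ prescription that produced the propagator in \eqref{eqn:A5}. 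Smearing this distribution against the smooth test function $f$ and the matrix element $\bra{F}\oppsi(p)\ket{\Phi}$ yields a finite number, so the integral converges in the weak-* topology of $\HilbK'$. The same computation shows why the image cannot lie in $\HilbK$: the factor $1/(p^{2}+m^{2})$ is singular on the mass shell, so $\oppsill[f]$ applied to a bra returns a genuine distribution rather than a test function. This is the structural reason the integrated field is naturally an operator on $\HilbK'$.

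Finally, $\bra{0}\oppsill[f]$ is the one-particle bra with momentum profile $f\conj(p)/[\mi(p^{2}+m^{2}-\mi\varepsilon)]$, and repeated application builds up multi-particle distribution-bras; density of $K'$ in $\HilbK'$ then follows from the weak-* dual of the cyclicity of the vacuum (\axm{axm:cyclicity}), once one checks that smearing against the invertible-away-from-the-shell multiplier $p^{2}+m^{2}$ does not shrink the span. The main obstacle I expect is the convergence step itself: making the interchange of the $\lambda$- and $p$-integrations rigorous and verifying that the resulting mass-shell singularity is no worse than a tempered distribution, so that the image genuinely lands back in $\HilbK'$ and can be iterated. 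This single estimate underlies both the well-definedness of $\oppsill[f]$ on $K'$ and the claim that it carries $K'$ into $K'$.
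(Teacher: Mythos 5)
The paper offers no explicit proof of this theorem: it asserts that the integrated-field theorems ``follow directly from the definition of the integrated field and the axioms for the corresponding unintegrated field,'' and the only justification supplied is the remark immediately following the statement, namely that applying $\oppsill[f]$ to a normalizable state generally yields a non-normalizable vector, which is why the domain must live in $\HilbK'$. Your momentum-space computation is exactly the content behind that remark, and it is correct: with $\oppsil{p}=\me^{-\mi(p^{2}+m^{2})\lambda}\oppsi(p)$, the $\lambda$-integral produces the multiplier $\me^{-\mi(p^{2}+m^{2})\lambdaz}/[\mi(p^{2}+m^{2}-\mi\varepsilon)]$ of \eqn{eqn:A5}, whose mass-shell singularity is precisely what pushes the image out of $\HilbH$ and into the distribution space. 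Your duality definition of $\bra{F}\oppsill[f]$ and your choice of $K'$ as the span of polynomials in the integrated fields applied to $\bra{0}$ also match how the paper actually uses these operators (only ever inside expressions of the form $\bra{\Psi}\opA\ket{\Phi}$). So where the paper supplies an assertion, you supply an argument, and the argument is the right one.

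Two caveats. First, your convergence step leans on the free-field relation $[\oppsi,\opH_{0}]=(\opP^{2}+m^{2})\oppsi$; the theorem as worded in \sect{sect:integrated} is for an arbitrary field and Hamiltonian satisfying the axioms, and for a general $\opH$ nothing in the axioms forces the oscillatory $\lambda$-integral to converge even distributionally. Your proof therefore covers the free field and, via the unitary equivalence of \sect{sect:interacting}, the interacting fields built from it --- which is every case the paper uses --- but not the literal generality of the statement; the paper does not establish that generality either. Second, the density step is stated quickly: the clean way to close it is to note that the span of the bras $\bra{0}\oppsill[f_{1}]\cdots\oppsill[f_{m}]$ is weak-* dense iff its pre-annihilator in $\HilbK$ vanishes, which follows because the multiplier $\me^{-\mi(p^{2}+m^{2})\lambdaz}/[\mi(p^{2}+m^{2}-\mi\varepsilon)]$ is nonvanishing, so pairing to zero against all $f\conj(p)$ with $f\in\HilbD$ forces the ket's momentum profile to vanish. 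With those two qualifications, your proposal is a faithful and more complete rendering of what the paper leaves implicit.
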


The domain $K'$ here is a subset of the generalized state space
$\HilbK'$, rather than $\HilbH$, because the result of applying
$\oppsill[f]$ to a state in $\HilbH$ generally results in a vector
that is not normalizable.  As a result, an expression such as
$\bra{\Psi}\oppsill[f]\ket{\Phi}$ should strictly be understood as the
application of $\bra{\Psi}\oppsill[f] \in \HilbK'$ to $\ket{\Phi} \in
\HilbK$.  However, with a further common abuse of ket notation, we can
formally use the notation $\opA\ket{\Phi}$ for any $\opA$ constructed
as a polynomial in the integrated fields, with the understanding that
this generally only has meaning in the context of an expression like
$\bra{\Psi}\opA\ket{\Phi}$, in which case it means $\bra{\Psi}\opA$
applied to $\ket{\Phi}$.

\begin{theorem} \label{thm:integrated-2}
    For any Poincar\'e transformation $\{\Delta x, \Lambda\}$, 
    \begin{equation*}
	\opU(\Delta x, \Lambda) \oppsill[f] 
	\opU^{-1}(\Delta x, \Lambda)
	    = \oppsill[\{\Delta x, \Lambda\}f] \,.
    \end{equation*}
    Given the Hamiltonian $\opH$, for any parameter translation
    $\Delta\lambda$,
    \begin{equation*}
	\me^{\mi\opH\Delta\lambda} \oppsill[f]
	\me^{-\mi\opH\Delta\lambda} 
	     = \oppsilll{\lambdaz + \Delta\lambda}[f] \,.
    \end{equation*}
    These equations are valid when applied to any element of $K'$.
\end{theorem}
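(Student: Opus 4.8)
The plan is to reduce each identity to a \emph{pointwise-in-$\lambda$} statement about the unintegrated Heisenberg-picture field $\param{\oppsi}[f\lambda] = \me^{\mi\opH\lambda}\oppsi[f]\me^{-\mi\opH\lambda}$ and then to commute the bounded, $\lambda$-independent unitaries through the integral that defines $\oppsill[f]$ in \eqn{eqn:B8}. Because \thm{thm:integrated-1} places $\oppsill[f]$ in $\HilbK'$, I would read every equation \emph{weakly}, as the identity obtained by applying $\bra{\Psi}\oppsill[f]\in K'$ to a ket $\ket{\Phi}\in\HilbK$.

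For the Poincar\'e law I would first transport \axm{axm:transform} to the Heisenberg field. Since any Hamiltonian commutes with $\opU(\Delta x, \Lambda)$, so do the exponentials $\me^{\pm\mi\opH\lambda}$, and $\opU(\Delta x, \Lambda)$ therefore passes through them to act on $\oppsi[f]$ alone; \axm{axm:transform} then gives $\opU(\Delta x, \Lambda)\param{\oppsi}[f\lambda]\opU^{-1}(\Delta x, \Lambda) = \param{\oppsi}[{\{\Delta x, \Lambda\}f}\lambda]$ for each fixed $\lambda$. As $\opU(\Delta x, \Lambda)$ carries no $\lambda$-dependence, moving it inside the integral over $\lambda\in[\lambdaz,\infty)$ produces $\opU(\Delta x, \Lambda)\oppsill[f]\opU^{-1}(\Delta x, \Lambda) = \oppsill[\{\Delta x, \Lambda\}f]$, which is the first claim; note that $\{\Delta x, \Lambda\}f$ again lies in $\HilbD$, so the right-hand side is a legitimate integrated field.

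For the parameter translation I would argue in the same way. The factor $\me^{\pm\mi\opH\Delta\lambda}$ is again independent of the integration variable and may be drawn inside the integral, and the group law $\me^{\mi\opH\Delta\lambda}\me^{\mi\opH\lambda} = \me^{\mi\opH(\lambda+\Delta\lambda)}$ turns the integrand into $\param{\oppsi}[f{\lambda + \Delta\lambda}]$. The final step is the substitution $\lambda \mapsto \lambda + \Delta\lambda$: since $\dl$ is translation invariant and the upper limit remains $\infty$, only the lower endpoint moves, from $\lambdaz$ to $\lambdaz + \Delta\lambda$, so the integral becomes precisely the defining integral for $\oppsilll{\lambdaz + \Delta\lambda}[f]$.

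The formal algebra above is immediate, so I expect the only genuine obstacle to be the analytic justification of each step for an improper, operator-valued integral that converges merely in the weak sense on $K'$. Concretely, for $\bra{\Psi}\in K'$ and $\ket{\Phi}\in\HilbK$ I would verify that $\int_{\lambdaz}^{\infty}\dl\,\bra{\Psi}\param{\oppsi}[f\lambda]\ket{\Phi}$ converges --- which is exactly the content of \thm{thm:integrated-1} --- and that inserting the unitaries $\opU(\Delta x, \Lambda)$ or $\me^{\mi\opH\Delta\lambda}$, which by \axm{axm:domain} and \thm{thm:integrated-1} preserve the relevant domains, yields a scalar integrand whose integral still converges, so that a dominated-convergence argument licenses interchanging the bounded operator with the integral and relabelling the variable. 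Since both transformations are unitary and $\lambda$-independent and the change of variables alters only the finite lower endpoint while leaving the tail at $\infty$ untouched, the convergence guaranteed by \thm{thm:integrated-1} is preserved, and both identities then hold throughout $K'$.
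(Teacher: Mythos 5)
Your proof is correct and follows exactly the route the paper intends: the paper offers no explicit proof, stating only that the theorems ``follow directly from the definition of the integrated field and the axioms for the corresponding unintegrated field,'' and your argument---commuting the $\lambda$-independent unitaries through the defining integral, invoking \axm{axm:transform} together with $[\opH,\opU(\Delta x,\Lambda)]=0$ for the first identity and the group law plus the shift $\lambda\mapsto\lambda+\Delta\lambda$ for the second---is precisely that direct derivation. Your added care in reading the identities weakly on $K'$ and justifying the interchange of integral and operator is consistent with the paper's remarks following \thm{thm:integrated-1} and, if anything, makes the argument more complete than the paper's.
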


The use of the integrated fields can be generalized to cover any
$\opA$ in the algebra $\HilbA$ of operators formed as polynomials in
the (unintegrated) smeared field operators.  Given the functional form
for $\opA$ in \eqn{eqn:B3}, define
\begin{equation*}
    \opbar{A}[\{f_{i}\}, \{g_{j}\}] 
	\equiv \opA(\{\oppsill[f_{i}]\}, 
	       \{\param{\oppsit}[{g_{j}}\lambdaz]\})
	     = \opA[\{\bar{f}_{i}\}, \{g_{j}\}] \,.
\end{equation*}
That is, all instances of the $\oppsi$ operators in the construction
of $\opA$ are replaced with their integrated versions, but \emph{not}
instances of the adjoint $\oppsit$ operators (for which the
Heisenberg-picture versions are used here, with a parameter argument
consistent with the integrated fields). To simplify the notation, an 
overbar will also be placed on a functional argument to indicate that 
it is to be used as the argument of an integrated field, as in 
$\bar{f}_{i}$ in the second equality above.

Define the \emph{special adjoint} $\opbar{A}\dadj$, such that
\begin{equation} \label{eqn:B9}
    \opbar{A}\dadj[\{g_{j}\}, \{f_{i}\}] 
	\equiv \overline{\opA\adj}[\{g_{j}\}, \{f_{i}\}]
	     = \opA\adj[\{\bar{g}_{j}\}, \{f_{j}\}]
	     = (\opA[\{f_{i}\}, \{\bar{g}_{j}\}])\adj\,,
\end{equation}
where $\opA\adj$ is defined as in \eqn{eqn:B4}.  Note that the
special adjoints differs from the regular adjoint of $\opbar{A}$ in
that
\begin{equation*}
    \opbar{A}\adj[\{g_{j}\}, \{f_{i}\}]
	= (\opbar{A}[\{f_{i}\}, \{g_{j}\}])\adj
	= \opA\adj[\{g_{j}\}, \{\bar{f}_{i}\}]
        \neq \opbar{A}\dadj[\{g_{j}\}, \{f_{i}\}] \,.
\end{equation*}
In particular, the special adjoint of the field operator
$\oppsill[f]$ is
\begin{equation*}
    \oppsill\dadj[f] 
	= \oppsill\dadj[\{\}, \{f\}]
	= (\param{\oppsi}[{\{f\}, \{\}}\lambdaz])\adj
	= \param{\oppsi\adj}[f\lambdaz]
\end{equation*}
and
\begin{equation*}
    \param{\oppsi\dadj}[f\lambdaz] = (\oppsill^{\ddag\ddag}[f])\adj 
                                   = \oppsill\adj[f] \,.
\end{equation*}

This special adjoint has all the properties required of an involution
on the *-algebra of operators of the form $\opbar{A}$.  First, 
\begin{equation*}
    \opbar{A}\dadj{}\dadj = (\overline{\opA\adj})\dadj
                          = \overline{\opA\adj{}\adj}
			  = \opbar{A} \,.
\end{equation*}
Further, clearly
\begin{equation*}
    (\opbar{A} + \opbar{B})\dadj = \opbar{A}\dadj + \opbar{B}\dadj \,.
\end{equation*}
and, for any complex number $a$,
\begin{equation*}
    (a\opbar{A})\dadj 
	= (\overline{a\opA})\adj
	= \overline{a\conj\opA\adj}
	= a\conj\overline{\opA\adj}
	= a\conj\opbar{A}\dadj \,.
\end{equation*}
Finally, note that, 
\begin{equation*}
    \begin{split}
	\opbar{A}[\{f_{i}\}, \{g_{j}\}]
	\opbar{B}[\{f'_{i}\}, \{g'_{j}\}]
	&= \opA[\{\bar{f}_{i}\}, \{g_{j}\}]
	   \opB[\{\bar{f}'_{i}\}, \{g'_{j}\}] \\
	&= (\opA\opB)[\{\bar{f}_{i}\}, \{\bar{f}'_{i}\};
	              \{g_{j}\}, \{g'_{j}\}] \\
	&= (\overline{\opA\opB})[\{f_{i}\}, \{f'_{i}\};
	                         \{g_{j}\}, \{g'_{j}\}] \,,
    \end{split}
\end{equation*}
so $\opbar{A}\opbar{B} = \overline{\opA\opB}$. Then
\begin{equation*}
    (\opbar{A}\opbar{B})\dadj 
	= (\overline{\opA\opB})\dadj
	= \overline{(\opA\opB)\adj}
	= \overline{\opB\adj\opA\adj}
	= \overline{\opB\adj}\overline{\opA\adj}
	= \opbar{B}\dadj\opbar{A}\dadj \,.
\end{equation*} 

With the above definitions, the following theorem holds.

\begin{theorem} \label{thm:integrated-3}
    Operators constructed as polynomials in the field $\oppsill[f]$
    and its adjoint $\oppsill\dadj[f]$ form a *-algebra
    $\bar{\HilbA}$, with involution given by the special adjoint
    $\opbar{A}\dadj$ for $\opbar{A} \in \bar{\HilbA}$, isomorphic to
    the algebra $\HilbA$ defined using the unintegrated fields.
\end{theorem}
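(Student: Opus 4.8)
The plan is to show that the overbar substitution $\opA \mapsto \opbar{A}$ is itself the desired isomorphism, so I would define $\Phi\colon \HilbA \to \bar{\HilbA}$ by $\Phi(\opA) = \opbar{A}$ and verify in turn that $\bar{\HilbA}$ is a $*$-algebra, that $\Phi$ is a $*$-homomorphism, and that $\Phi$ is a bijection. The first two points are essentially assembled already. Closure of $\bar{\HilbA}$ under addition and scalar multiplication is immediate, and closure under multiplication follows from the product rule $\opbar{A}\opbar{B} = \overline{\opA\opB}$ derived above, which shows that the product of two overbarred operators is again of overbarred form; the same rule shows that $\opbar{A}\dadj = \overline{\opA\adj}$ lies in $\bar{\HilbA}$. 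Together with the four identities checked above---$\opbar{A}\dadj{}\dadj = \opbar{A}$, additivity of $\dadj$, the conjugate-linearity $(a\opbar{A})\dadj = a\conj\opbar{A}\dadj$, and the order reversal $(\opbar{A}\opbar{B})\dadj = \opbar{B}\dadj\opbar{A}\dadj$---this establishes that the special adjoint is a genuine involution and hence that $\bar{\HilbA}$ is a $*$-algebra.

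Next I would verify that $\Phi$ is a $*$-homomorphism. Linearity is clear because the overbar substitution acts slot-by-slot on the functional form of \eqn{eqn:B3} without disturbing the scalar coefficients; multiplicativity is precisely $\opbar{A}\opbar{B} = \overline{\opA\opB}$; and compatibility with the involutions is built into the definition \eqn{eqn:B9}, which reads $\Phi(\opA\adj) = \opbar{A}\dadj = \Phi(\opA)\dadj$. Surjectivity is then immediate: the generators $\oppsill[f]$ and $\oppsill\dadj[f] = \param{\oppsit}[f\lambdaz]$ of $\bar{\HilbA}$ are the $\Phi$-images of the generators $\oppsi[f]$ and $\oppsit[f]$ of $\HilbA$, and a map respecting sums and products that is onto the generators is onto the whole algebra.

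The step I expect to be the real obstacle is injectivity, equivalently the claim that $\Phi$ is well defined and faithful at the level of algebra \emph{elements} rather than merely of formal words. The subtlety is that the integrated generators do not satisfy the same relations as the unintegrated ones: whereas $\bravac\oppsi[f]\oppsit[g]\ketvac = (f,g)$, the two-point value of the integrated field is $\bravac\oppsill[f]\param{\oppsit}[g\lambdaz]\ketvac = \propsym[f,g]$, the full propagator, so the canonical commutation relation of $\HilbA$ is \emph{not} inherited by its image. One therefore cannot argue that $\Phi$ descends to the operator relations in the naive way, and a proof that merely rewrites words could hide a collapse of distinct elements. The resolution I would adopt is to present both $\HilbA$ and $\bar{\HilbA}$ on the functional forms of \eqn{eqn:B3}---the combinatorial data recording the ordered list of annihilation and creation slots together with their test-function labels---so that $\Phi$ is literally a relabeling of generators carrying one presentation onto the other. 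Injectivity then follows from the explicit inverse substitution $\oppsill[f] \mapsto \oppsi[f]$, $\param{\oppsit}[g\lambdaz] \mapsto \oppsit[g]$, which by the same additive, multiplicative and involutive rules is a two-sided inverse of $\Phi$. Making precise that this presentation-level bijection is the honest content of the stated isomorphism, and that nothing is lost in the differing operator relations, is the crux of the argument.
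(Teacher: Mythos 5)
Your proposal is correct and follows essentially the same route as the paper, whose entire proof consists of the verifications immediately preceding the theorem --- that the special adjoint is an involution ($\opbar{A}\dadj{}\dadj = \opbar{A}$, additivity, conjugate-linearity, and $(\opbar{A}\opbar{B})\dadj = \opbar{B}\dadj\opbar{A}\dadj$ via $\opbar{A}\opbar{B} = \overline{\opA\opB}$) --- with the isomorphism taken to be the overbar substitution. Your additional discussion of injectivity, namely that the map must be read at the level of the functional forms of \eqn{eqn:B3} because the integrated generators satisfy different commutation relations ($\propsym[f',f]$ rather than $(f',f)$), is more careful than the paper, which leaves that point entirely implicit.
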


It is also possible to define unsmeared versions of the integrated 
fields:
\begin{equation*}
    \oppsill[f] = \intfour x\, f\conj(x) \oppsill(x) \,,
\end{equation*}
where the unsmeared integrated fields are
\begin{equation*}
    \oppsill(x) 
	\equiv \int_{\lambdaz}^{\infty} \dl\, \oppsil{x} \,.
\end{equation*}
The special adjoint operates on these fields as $\oppsill\dadj(x) =
\oppsitlz{x} \text{ and } \oppsill^{\ddag\ddag}(x) = \oppsill(x)$.

\subsection{Antiparticles} \label{sect:antiparticles}

As mentioned in \sect{sect:why-off-shell}, in perturbative QFT,
backward-time particle propagation is traditionally re-interpreted as
forward-time antiparticle propagation.  Alternatively, one can also
view antiparticles as particles that formally propagate backwards in
time \cite{feynman49}.  Indeed, Stueckelberg consider segments of
particle paths with reversed-time propagation to represent
antiparticles \cite{stueckelberg41,stueckelberg42}.  In
\refcite{seidewitz06a}, I presented a related but importantly
different formulation, in which only the relative time-ordering of the
end points of a particle path determined whether the propagation is
to be interpreted as for a particle or antiparticle, with the path
otherwise unconstrained between these points.

However, in \refcite{seidewitz11}, I found that, in order to produce
complete Feynman diagrams for scattering using the spacetime path
formalism, it was also necessary to introduce particle propagation
that was reversed in the sense of the path parameter $\lambda$, as
well as propagation reversed in the sense of time.  It turns out,
though, that such reverse-propagating particles can also be used
directly to represent antiparticles, under the condition in which
forward propagation in $\lambda$ is aligned with forward propagation
in time.  This will be discussed further in \sect{sect:scattering}.
In the present section, I will present the formal model of
antiparticles as reverse-propagating particles, which will be used in
the modeling of interactions in \sect{sect:application}.

To do this, first note that, in \eqn{eqn:B8} defining $\oppsill$, the
integral is only over parameter values $\lambda > \lambdaz$.  This
reflects the conception that $\lambda - \lambdaz$ is a path length
and, therefore, must be positive.  However, it was actually an
arbitrary choice to presume that particle propagation along a path was
for values of $\lambda$ \emph{increasing} from $\lambdaz$.  One could
just as well take $\lambda$ to \emph{decrease} from $\lambdaz$ and
consider the (positive) intrinsic path lengths $\lambdaz - \lambda$.

Such \emph{reverse} propagation is reflected in the field vacuum
expectation values, such that
\begin{equation*}
    \bravac \param{\oppsinn{-}}[f\lambda] 
	   \param{\oppsinn{-}\adj}[g\lambdaz] \ketvac
	= \kersym[f, g; \lambdaz - \lambda]\,,
\end{equation*}
where $\lambdaz - \lambda$ is positive for $\lambda < \lambdaz$.  Note
the difference with \eqn{eqn:B7}, even assuming an equivalent
underlying form for the functional $\kersym[f, g; \Delta\lambda]$.
For a \emph{forward}-propagating field, if $\lambda < \lambdaz$, the
$\Delta\lambda$ argument in $\kersym[f, g; \Delta\lambda]$ is
negative, while, for a reverse-propagating field, it is positive.

Therefore, in terms of propagation, a reverse-propagating field really
is a different sort of field than a forward-propagating one.
Nevertheless, for fields that otherwise have vacuum expectation values
given by the same functional $\kersym[f, g; \Delta\lambda]$ (e.g.,
particles of the same mass and spin), we will take the
reverse-propagating field $\oppsinn{-}$ to represent the
\emph{antiparticle} of the particle represented by the corresponding
forward-propagating field $\oppsinn{+}$ (as motivated earlier).  That
is, antiparticles propagate in the \emph{opposite direction} in
$\lambda$ from their corresponding particles.

This also means that, when defining integrated fields for
antiparticles, for a given $\lambdaz$, propagation is \emph{downwards}
from $\lambdaz$ to lower values of $\lambda$.  Therefore, the
integration must be over all $\lambda < \lambdaz$:
\begin{equation*}
    \oppsinnll{-}[f] 
	\equiv \int_{-\infty}^{\lambdaz} \dl\, 
	    \param{\oppsinn{-}}[f\lambda] \\
\end{equation*}
However, the integrated vacuum expectation value is then
\begin{equation*}
    \begin{split}
	\bravac\oppsinnll{-}[f]\oppsinnll{-}\dadj\ketvac 
	    &= \bravac \int_{-\infty}^{\lambdaz} \dl\, 
		\param{\oppsinn{-}}[f\lambda] 
		\param{\oppsinn{-}\adj}[g\lambdaz] \ketvac
	    = \int_{-\infty}^{\lambdaz} \dl\, 
		\kersym[f, g; \lambdaz - \lambda] \\
 	    &= \int_{-\infty}^{0} \dl\, 
 		\kersym[f, g; -\lambda]
	    = \int^{\infty}_{0} \dl\, 
		\kersym[f, g; \lambda]
	    = \propsym[f, g] \,,
	\end{split}
\end{equation*}
giving, in the end, the same propagation factor as for regular 
particles. All the theorems given in \sect{sect:integrated} apply 
equally to the reverse-propagated integrated field, as do the 
definitions of its corresponding *-algebra and special adjoint.

\section{Application} \label{sect:application}

This section considers the application of the theory presented in
\sect{sect:theory} to the task of modeling interactions and computing
scattering amplitudes.  In particular, the goal is to construct a
unitary transformation $\opG(\lambda)$ between a set of free and
interacting fields of different types of particles, representing a
realistic form of interaction, and then to describe how scattering
amplitudes can be computed from this.

\Sect{sect:states} defines the formalism of particle states to be used
in this section.  \Sect{sect:interactions} then describes how to
construct $\opG(\lambda)$ from particle interactions and
\sect{sect:regularization} briefly addresses the problem of
regularization.  Finally, \sect{sect:scattering} uses this
construction to show how the traditional Dyson expansion for
scattering amplitudes can be derived in the parameterized theory, but
without the mathematical inconsistency associated with the traditional
derivation.

\subsection{Particle States} \label {sect:states}

For the purposes of \sect{sect:application}, it will generally be
convenient to use the unsmeared form for field operators.  Let
$\HilbH$ be the free-particle Fock space and $\oppsi(x)$ be a free
field defined on it, as described in \sect{sect:free}.  Take
$\opH_{0}$ to be the free, relativistic Hamiltonian such that
\begin{equation*}
    \mi\pderiv{\oppsixl}{\lambda} 
	= [\oppsixl, \opH_{0}]
	= (\opP^{2} + m^{2})\oppsixl
	= (-\frac{\partial^{2}}{\partial x^{2}} + m^{2}) \oppsixl \,,
\end{equation*}
the solution of which is
\begin{equation} \label{eqn:C1}
    \oppsixl = \intfour \xz\, \kerneld \oppsilz{\xz} \,,
\end{equation}
where $\kerneld$ is the propagation kernel given in \eqn{eqn:A3}.

For any four-position $x$, define the \emph{position-state} bras
$\brax \equiv \bravac\oppsi(x) \in \HilbK'$.  These bras act as a
continuous basis for $\HilbK$, such that, for any $\ket{f} \in
\HilbK$, $\inner{x}{f} = f(x)$.  Informally, we can also define the
dual kets $\ketx$, such that $\inner{x'}{x} = \delta^{4}(x' - x)$, so
\begin{equation} \label{eqn:C2}
    \ket{f} = \intfour x\, f(x) \ketx \,.
\end{equation}
Of course, with the above normalization, the $\ketx$ are not actually
in $\HilbK$ (or $\HilbH$), but, similarly to the unsmeared field
operators, they may be used with care as long as it is understood that
it is really only expressions such as \eqn{eqn:C2} that are properly
defined, for $f \in \HilbD$.

For example, consider the momentum version of such single particle
states, $\ketp = \oppsit(p)\ketvac$, where $\oppsi(x)$ is the
four-dimensional Fourier transform of $\oppsi(x)$.  The free
Hamiltonian acts on these momentum states as $\opH\ket{p} = (p^{2} +
m^{2})\ket{p}$.  So, it would seem that, for an on-shell state with
$p^{2} = -m^{2}$, one would have $\opH\ket{p} = 0$, which would be a
non-vacuum null eigenstate of $\opH$, in violation of the requirements
for a Hamiltonian operator.  

There is, of course, no actual violation here, since $\ketp \notin
\HilbH$.  On the other hand, $\bra{p} \in \HilbK'$ \emph{is} properly
defined, for all $p$, such that $\inner{p}{f} = f(p)$, where $f(p)$ is
the Fourier transform of $f(x)$.  One can thus define
\begin{equation*}
    \bra{m} \equiv \intfour p\, \delta^{4}(p^{2} + m^{2}) \theta(p^{0}) 
                   \bra{p}
	         = \int \frac{\dthree p}{2\Ep}\, \bra{\Ep, \threep}
\end{equation*}
such that $\inner{m}{f}$ as the probability amplitude for an arbitary
state $\ket{f}$ to be on shell.  (A similar approach is effectively
used in \sect{sect:scattering} to ensure the on-shell nature of
external legs of a scattering process.)

With the above caveat, then, we can take $\ketx = \oppsit(x)\ketvac$.
Similarly, for the Heisenberg-picture fields, we can define $\ketxl
\equiv \oppsitl{x}\ketvac$.  Now, despite their apparent dependence on
$\lambda$, the $\ketxl$ are actually Heisenberg-picture states, since
the $\oppsitl{x}$ are Heisenberg-picture operators.  They represent
the state of a particle localized at a specific four-position $x$
\emph{and} at a specific parameter value $\lambda$.

Multiple applications of $\oppsitl{x}$ result in the multi-particle 
position states
\begin{equation*}
    \ket{\xliN} \equiv 
	\param{\oppsit}({x_{1}}{\lambda_{1}})\cdots
	\param{\oppsit}({x_{N}}{\lambda_{N}})\ketvac \,,
\end{equation*}
which represent multiple particles localized at the four-positions 
$x_{i}$, each with their own specific parameter values $\lambda_{i}$. 
It will also be convenient to have a shorthand notation for the case 
of multiple particles at different positions, but all with the same 
parameter value $\lambdaz$:
\begin{equation*}
    \ket{\xlziN} \equiv \ket{\seqN{\xlzi}} \,.
\end{equation*}

Clearly, for any one value of $\lambda$,
\begin{equation*}
    \innerll{x}{\xz} = \bravac\oppsil{x}\oppsitl{\xz}\ketvac
                    = \delta^{4}(x - \xz) \,,
\end{equation*}
by \axm{axm:commutation} and the normalization of $\ketvac$. However, 
for different $\lambda$ values, using \eqn{eqn:C1} gives
\begin{equation*}
    \innerllz{x}{\xz} = \bravac\oppsil{x}\oppsitlz{\xz}\ketvac
                     = \kerneld \,.
\end{equation*}
The two-point different-$\lambda$ expectation value for $\oppsixl$ is
the probability amplitude for the propagation of a particle from
position $\xz$ at parameter value $\lambdaz$ to position $x$ at
parameter value $\lambda$.  As discussed in \sect{sect:integrated},
the full propagation amplitude is then given by integrating over
$\lambda$, which is equal to the vacuum expectation value of the
integrated field:
\begin{equation} \label{eqn:C3}
    \bravac  \oppsill(x)\oppsill\dadj(\xz) \ketvac
	= \int_{\lambdaz}^{\infty} \dl\, \innerllz{x}{\xz}
	= \int_{\lambdaz}^{\infty} \dl\, \kerneld
	= \prop \,.
\end{equation}
(Note the use here of the special adjoint defined in
\sect{sect:integrated}.)

Next, following on the idea introduced at the end of
\sect{sect:expectation}, consider an interacting-field operator
$\param{\oppsi\Int}(x\lambda)$ that, in the interaction picture, is
related to the free-field operator by a $\lambda$-dependent unitary
transformation:
\begin{equation*}
    \param{\oppsi\Int}(x\lambda) = 
	\opG(\lambda)\oppsixl\opG^{-1}(\lambda) \,,
\end{equation*}
where $\opG(\lambda)$ is constructed as a functional of the field
operator.  Actually, it will be more convenient to work primarily with
integrated field operators.  However, rather than directly integrating
$\param{\oppsi\Int}(x\lambda)$ (which would not be proper in the
interaction picture), instead take $\oppsill\Int(x)$ to be the
operator in the algebra $\bar{\HilbA}$ corresponding to
$\param{\oppsi\Int}(x\lambda)$ in the algebra $\HilbA$ (as discussed
at the end of \sect{sect:integrated}).  This is
\begin{equation*}
    \oppsill\Int(x) = 
	\opbar{G}(\lambdaz)\oppsill(x)\opbar{G}^{-1}(\lambdaz) \,,
\end{equation*}
where the operator $\opbar{G}$ is has the same functional form as
$\opG$, but with the unintegrated field operator and its adjoint
replaced by the integrated field operator and its special adjoint.
With this notation, it is easy to create interacting-particle
position states:
\begin{equation} \label{eqn:C4}
    \begin{split}
	\ket{\xlziN}\Int 
	     &\equiv \oppsill\Int{}\dadj(x_{1})\cdots
		     \oppsill\Int{}\dadj(x_{N})\ketlz{0}\Int \\
	     &= \opbar{G}(\lambdaz)
	             \oppsill\dadj(x_{1})\cdots
		     \oppsill\dadj(x_{N})\ketvac \\
	     &= \opbar{G}(\lambdaz)
		     \param{\oppsit}({x_{1}}{\lambdaz})\cdots
	             \param{\oppsit}({x_{N}}{\lambdaz})\ketvac \\
	     &= \opbar{G}(\lambdaz)\ket{\xlziN} \,,
     \end{split}
\end{equation}
where $\ketlz{0}\Int \equiv \opbar{G}(\lambdaz)\ketvac$ is the
\emph{interacting vacuum}.

\subsection{Interactions} \label{sect:interactions}

The goal now is to construct the \emph{interaction operator}
$\opbar{G}(\lambdaz)$ to represent interactions between different
types of particles.  In order to do this, introduce a set of
(integrated) free fields $\oppsinll(x)$, indexed by the \emph{particle
type} $n$.  Each of these fields act on the same Hilbert space
$\HilbH$, with the same domain $D$, they each individually satisfy the
axioms from \sect{sect:fields} and each otherwise commutes with all
fields of other particle types (and their adjoints).  They also each
have corresponding interacting fields given by
\begin{equation*}
    \oppsinll\Int(x) = 
	\opbar{G}(\lambdaz)\oppsinll(x)\opbar{G}^{-1}(\lambdaz) \,.
\end{equation*}
While the free field $\oppsinll$ of a specific particle type $n$ is
independent of the fields for all other particle types, the 
interacting field $\oppsinll\Int$ will depend on the fields for other 
particle types through the construction of $\opbar{G}$.

Note that, if $\opG$ is unitary, then, from the definition of the 
special adjoint, \eqn{eqn:B9},
\begin{equation*}
    \opbar{G}\dadj = \overline{\opG\adj}
                   = \overline{\opG^{-1}}
		   = \opbar{G}^{-1} \,.
\end{equation*}
That is, $\opbar{G}$ is unitary with respect to the special adjoint.
From now forward, we will drop the overbar from $\opbar{G}$ and just
consider the operator $\opG$ as constructed from the integrated field
operators and their adjoints.  Since $\opG$ is unitary with respect to
the special adjoint, we can take it to have the form
\begin{equation} \label{eqn:C4a}
    \opG(\lambdaz) = \me^{-\mi \opV(\lambdaz)} \,,
\end{equation}
where $\opV(\lambdaz)$ is self-adjoint with respect to the special
adjoint (that is, $\opV(\lambdaz) = \opV\dadj(\lambdaz)$).  

From \thm{thm:integrated-2}, an integrated field $\oppsill(x)$ evolves
in $\lambdaz$ according to the Hamiltonian of the corresponding
unintegrated field $\oppsixl$.  Therefore, we can carry the discussion
of interacting fields in \sect{sect:interacting} over to the case of
integrated fields, as currently being applied.  Thus, as described at
the end of \sect{sect:interacting}, take the effective
interaction-picture Hamiltonian for the interacting field
$\oppsill\Int(x)$ to have the form
\begin{equation*}
    \opH(\lambdaz) = \opH_{0} + \opH\Int(\lambdaz) \,,
\end{equation*}
where $\opH\Int(\lambdaz)$ is the interaction Hamiltonian in the
interaction picture. Then, using \eqns{eqn:B6} and \eqref{eqn:C4a},
\begin{equation} \label{eqn:C5}
    \opH\Int(\lambdaz) 
	 = -\mi \deriv{\opG(\lambdaz)}{\lambdaz}\opG^{-1}(\lambdaz)
	 = -\deriv{\opV(\lambdaz)}{\lambdaz} \,.
\end{equation} 

Further, taking the series expansion for $\opG$,
\begin{equation} \label{eqn:C6}
    \opG(\lambdaz) 
	= \sum_{m} \frac{(-\mi)^{m}}{m!} \opV^{m}(\lambdaz) \,,
\end{equation}
the $m$th term in the series represents the case of exactly $m$
interactions.  The factor $1/m!$ accounts for the $m!$ possible
permutations of the $m$ factors of $\opV$ in the term.  Thus, $\opV$
is the \emph{vertex operator} that determines the effect of the
individual interaction vertices collected within $\opG$. Take
\begin{equation} \label{eqn:C7}
    \opV(\lambdaz) = \intfour x\, \opV_{\lambdaz}(x) \,,
\end{equation}
for some $\opV_{\lambdaz}(x)$ that is self-adjoint.
$\opV_{\lambdaz}(x)$ represents an interaction at the specific
four-position $x$ and is to be constructed from the integrated field
operators.  

Now, we need to account for both particles and antiparticles when
constructing the interaction at a vertex.  Let $\na$ denote a particle
type and $\nr$ denote the corresponding antiparticle type.  Then, as
discussed in \sect{sect:antiparticles}, we take the particle field
$\oppsinaxl$ to be forward propagating, while the antiparticle field
$\oppsinrxl$ is backward propagating.  Therefore,
\begin{equation*}
    \bravac\oppsinarxl\oppsinartlz{\xz}\ketvac = \kernelnar \,,
\end{equation*}
where $\kerneln$ is $\kerneld$ from \eqn{eqn:A3}, using the mass
$m_{n}$, which is the mass of both particles of type $\na$ and the
corresponding antiparticles of type $\nr$.  Note also that a particle
field is taken to commute with the corresponding antiparticle field
(and its adjoint), as for fields of completely different particle
types.  The integrated fields $\oppsinarll(x)$ are constructed as
described in \sect{sect:antiparticles}, such that
\begin{equation*}
    \bravac\oppsinarll(x)\oppsinarll\dadj(\xz)\ketvac = \prop \,.
\end{equation*}

Then, at a specific interaction position $x$, there may be either the
destruction of an ``incoming'' particle by $\oppsinall(x)$ or the
creation of an ``outgoing'' antiparticle by $\oppsinrll\dadj(x)$.
This can be reflected in the combined field \cite{seidewitz11}
\begin{equation*}
    \oppsinll(x) \equiv \oppsinall(x) + \oppsinrll\dadj(x) \,.
\end{equation*}
Since the special adjoint has the properties of an involution, we also
have
\begin{equation*}
    \oppsinll\dadj(x) \equiv \oppsinall\dadj(x) + \oppsinrll(x) \,,
\end{equation*}
which represents the creation of an ``outgoing'' particle by
$\oppsinall\dadj(x)$ or the destruction of an ``incoming''
antiparticle by $\oppsinrll(x)$.

An interaction vertex can be represented in terms of these operators
as
\begin{equation} \label{eqn:C8}
    \opV_{\lambdaz}(x) = 
	g :\prod_{i} \oppsinnll{n_{i}}\dadj(x) \oppsinnll{n_{i}}(x)
	         \prod_{j} \oppsinnll{n_{j}}'(x): \,,
\end{equation}
where $g$ is a coupling constant, $:\cdots:$ represents normal
ordering (that is, placing all $\oppsi\dadj$ operators to the left of
all $\oppsi$ operators in any product, for particles or
antiparticles), and the $\oppsi'$ are self-adjoint fields given by
\begin{equation*}
    \oppsinnll{n_{j}}'(x) 
	\equiv \oppsinnll{n_{j}}(x) + \oppsinnll{n_{j}}\dadj(x) \,.
\end{equation*}
A $\opV_{\lambdaz}(x)$ constructed in this way is clearly
self-adjoint, as required.  Further, it has the important commutivity
property given in the following theorem (proved in the appendix).

\begin{theorem}[Commutivity of the Vertex Operator] \label{thm:vertex}
    Let $\opV_{\lambda}(x)$ be defined as in \eqn{eqn:C8}.  Then
    \begin{equation*}
	[\opV_{\lambda_{1}}(x_{1}), \opV_{\lambda_{2}}(x_{2})] = 0 \,,
    \end{equation*}
    for all values of the $x_{i}$ and $\lambda_{i}$.
\end{theorem}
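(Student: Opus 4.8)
The plan is to reduce the vanishing of $[\opV_{\lambda_1}(x_1), \opV_{\lambda_2}(x_2)]$ to the statement that every field operator appearing in $\opV_{\lambda_1}(x_1)$ commutes with every field operator appearing in $\opV_{\lambda_2}(x_2)$. Each vertex is a normal-ordered polynomial in the integrated fields $\oppsi_{n_i}\dadj$, $\oppsi_{n_i}$ and the self-adjoint fields $\oppsi_{n_j}'$, all evaluated at a single spacetime point, and for free fields each pairwise commutator is a c-number. Hence, once I show that all of these c-numbers vanish, the two vertices --- polynomials in mutually commuting generators --- commute automatically, with no residual contraction terms left to track.

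First I would dispose of the trivial cases. Fields of distinct particle types commute by construction, and (as assumed in \sect{sect:interactions}) a particle field commutes with the corresponding antiparticle field and its special adjoint, so the only commutators that can be nontrivial are those between the combined fields $\oppsi_n$ and $\oppsi_n\dadj$ of a single type $n$. Writing each integrated field in terms of the field at a common reference parameter through \eqn{eqn:C1} and using the equal-$\lambda$ relation of \axm{axm:commutation}, the like-type commutators $[\oppsi_n(x_1), \oppsi_n(x_2)]$ and $[\oppsi_n\dadj(x_1), \oppsi_n\dadj(x_2)]$ vanish at once, since two destruction-type (respectively two creation-type) Heisenberg fields commute at equal $\lambda$.

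Everything then hinges on the mixed commutator $[\oppsi_n(x_1), \oppsi_n\dadj(x_2)]$. Substituting $\oppsi_n = \oppsi_{\na} + \oppsi_{\nr}\dadj$ and $\oppsi_n\dadj = \oppsi_{\na}\dadj + \oppsi_{\nr}$, the particle-antiparticle cross terms drop out and only a forward (particle) piece and a backward (antiparticle) piece survive. The particle piece, integrated over $[\lambda_1,\infty)$ with $[\oppsi_{\na}(x_1;\lambda), \oppsi_{\na}\adj(x_2;\lambda_2)] = \kersymn(x_1 - x_2; \lambda - \lambda_2)$, reduces after shifting the integration variable to $\int_{\lambda_1 - \lambda_2}^{\infty}\dl\,\kersymn(x_1 - x_2;\lambda)$. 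The antiparticle piece is integrated over $(-\infty,\lambda_2]$, and here the reverse-propagating field carries the \emph{reversed} kernel $[\oppsi_{\nr}(x;\lambda), \oppsi_{\nr}\adj(x';\lambda')] = \kersymn(x - x';\lambda' - \lambda)$ (which equals its vacuum expectation value because $\oppsi_{\nr}(x;\lambda)\ketvac = 0$); combined with the spatial evenness $\kersymn(-y;\lambda) = \kersymn(y;\lambda)$, this piece collapses to $-\int_{\lambda_1 - \lambda_2}^{\infty}\dl\,\kersymn(x_1 - x_2;\lambda)$ and exactly cancels the particle piece. The conjugate commutator $[\oppsi_n\dadj(x_1), \oppsi_n(x_2)]$ vanishes by the identical computation with the labels $1$ and $2$ interchanged, and the self-adjoint fields $\oppsi_n'$ inherit all four vanishings by linearity. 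Thus every generator of the first vertex commutes with every generator of the second, which proves the theorem.

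The step I expect to be the main obstacle is precisely this cancellation, where the difficulty is entirely in the bookkeeping. One must attach the forward integration range $[\lambda_1,\infty)$ to the particle field and the backward range $(-\infty,\lambda_2]$ to the antiparticle field, remember that the special adjoint returns an \emph{unintegrated} Heisenberg field at the reference parameter, and reverse the sign of the parameter argument of $\kersymn$ for the reverse-propagating (antiparticle) field. A single misplaced limit or sign spoils the cancellation, so the heart of the argument is checking that the forward particle integral and the reversed antiparticle integral are genuinely equal and opposite over the common interval $[\lambda_1 - \lambda_2, \infty)$.
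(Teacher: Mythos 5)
Your proof is correct and follows essentially the same route as the paper's: decompose the combined field into its forward-propagating particle and reverse-propagating antiparticle parts, show the two integrated mixed commutators reduce (via a shift of the integration variable and the evenness of the kernel) to equal and opposite integrals over $[\lambda_1-\lambda_2,\infty)$, and observe that normal ordering only subtracts c-numbers and so cannot affect any commutator. The paper makes the last point explicit through a point-splitting limit definition of the normal-ordered product, but that is only a more formal rendering of your ``no residual contraction terms'' observation, not a different argument.
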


\begin{corollary*}
    Let
    \begin{equation*}
	\opV(\lambda) = \intfour x\, \opV_{\lambda}(x) \,.
    \end{equation*}
    Then
    \begin{equation*}
	[\opV(\lambda_{1}), \opV(\lambda_{2})] = 0 \,,
    \end{equation*}
    for all values of the $\lambda_{i}$.
\end{corollary*}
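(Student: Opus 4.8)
The plan is to obtain the corollary as an immediate consequence of \thm{thm:vertex}, using nothing more than the bilinearity of the commutator together with the definition $\opV(\lambda_i) = \intfour x_i\, \opV_{\lambda_i}(x_i)$. Writing out both factors,
\begin{equation*}
    [\opV(\lambda_1), \opV(\lambda_2)]
        = \left[ \intfour x_1\, \opV_{\lambda_1}(x_1),\,
                 \intfour x_2\, \opV_{\lambda_2}(x_2) \right].
\end{equation*}
First I would expand this into the difference of two products of integrals and then recombine the terms into a single double integral, pulling both spacetime integrations outside the commutator by linearity in each slot. This gives
\begin{equation*}
    [\opV(\lambda_1), \opV(\lambda_2)]
        = \intfour x_1 \intfour x_2\,
          [\opV_{\lambda_1}(x_1), \opV_{\lambda_2}(x_2)].
\end{equation*}

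The integrand now vanishes identically by \thm{thm:vertex}, for every choice of $x_1$, $x_2$, $\lambda_1$ and $\lambda_2$; hence the double integral is zero and the corollary follows. No further computation is required, since all of the combinatorial and normal-ordering work has already been carried out in establishing the pointwise commutivity of the vertex densities.

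The one step that deserves genuine care --- and which I expect to be the main obstacle --- is justifying the interchange of the two spacetime integrations with the operator product, i.e.\ a Fubini-type argument for these operator-valued distributions. Since each integrated field $\oppsinll$ is properly defined only on a dense domain, and maps into $\HilbK'$ rather than back into $\HilbH$ (cf.\ \thm{thm:integrated-1}), the manipulation above should strictly be read as an identity between distributions, or equivalently as one holding when both sides are sandwiched between a bra $\bra{\Psi}$ and a ket $\ket{\Phi}$ drawn from the appropriate domain. Once this bookkeeping is made precise, the pointwise result of \thm{thm:vertex} closes the argument with no additional estimates.
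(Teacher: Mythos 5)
Your proof is correct and is essentially the paper's own argument: the corollary is presented there as an immediate consequence of \thm{thm:vertex}, obtained exactly as you do by using bilinearity to pull the two spacetime integrations outside the commutator so that the pointwise vanishing of $[\opV_{\lambda_{1}}(x_{1}), \opV_{\lambda_{2}}(x_{2})]$ kills the integrand. Your closing remark about reading the interchange of integration and operator product as a distributional identity on the appropriate domain is a reasonable extra caution, consistent with how the paper treats the integrated fields.
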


Because the $\opV(\lambda)$ commute for different $\lambda$, so will
$\opH\Int(\lambda) = -\dif \opV(\lambda) / \dl$.  Therefore,
\eqn{eqn:C5} may be integrated to obtain
\begin{equation*}
    \opV(\lambdaz) =  
	\int_{\lambdaz}^{\infty} \dl\, \opH\Int(\lambda) \,.
\end{equation*}
Note that, for the chosen integration bounds and any non-trivial
$\opH\Int(\lambdaz)$, $\opV(\lambdaz)$ will be non-zero for all finite
$\lambdaz$, and so $\opG(\lambdaz)$ will not be the identity for any
$\lambdaz$.  Thus, unlike the traditional interaction picture, in
which the free and interacting fields coincide at one point in time,
there is no $\lambdaz$ for which the free and interacting
parameterized fields are ``the same.''

\subsection{Regularization} \label{sect:regularization}

The operator $\opV_{\lambdaz}(x)$, as defined using normal ordering in
\eqn{eqn:C8}, is actually a proper distribution.  However, this is no
longer the case when it is integrated over all spacetime, as in
\eqn{eqn:C7}.  As a result, the series expansion of $\opG$ in
\eqn{eqn:C6} gives terms with integrals (over spacetime or momentum
space) that are not well defined.  These integrals require
regularization, as in the case of perturbation expansions in
traditional QFT.

As an example, consider the use of Pauli-Villars regularization 
\cite{pauli49}. In this approach, the Feynman propagator for a 
particle of mass $m$,
\begin{equation*}
    \propsym(x-\xz;m) 
	= -\mi(2\pi)^{-4}\intfour p\,
		\frac{\me^{\mi p\cdot(x - \xz)}}
		     {p^{2} + m^{2} - \mi\varepsilon} \,,
\end{equation*}
is replaced with the effective propagator $\propsym(x-\xz;m) -
\propsym(x-\xz;M)$, for $M > m$.  The integrals in the expansion are
regulated by the extra term in $M$ and can be evaluated.  Then, at
the end of the calculation, $M$ is taken to infinity, such that
$\propsym(x-\xz;M) \rightarrow 0$.

Now, consider that (see \eqn{eqn:C3})
\begin{equation*}
    \propsym(x;m)
	= \int_{\lambdaz}^{\infty} \dl\, \intfour p\, 
	    \me^{\mi p\cdot(x - \xz)}
	    \me^{-\mi(p^{2} + m^{2})(\lambda - \lambdaz)}
	= \int_{\lambdaz}^{\infty} \dl\,
	    [\oppsixl, \oppsitlz{\xz}] \,.
\end{equation*}
Therefore,
\begin{equation*}
    \begin{split}
    	\propsym(x-\xz;m) - \propsym(x-\xz;M) 
	    &= \int_{\lambdaz}^{\infty} \dl\, \intfour p\, 
		\me^{\mi p\cdot(x - \xz)}
		\left[ 
		    \me^{-\mi(p^{2} + m^{2})(\lambda - \lambdaz)} -
		    \me^{-\mi(p^{2} + M^{2})(\lambda - \lambdaz)}
		\right] \\
	    &= \int_{\lambdaz}^{\infty} \dl\,  
		\left[
		    1 - \me^{-\mi(M^{2} - m^{2})(\lambda - \lambdaz)}
		\right]
		\intfour p\, \me^{\mi p\cdot(x - \xz)}
		\me^{-\mi(p^{2} + m^{2})(\lambda - \lambdaz)} \\
	    &= \int_{\lambdaz}^{\infty} \dl\,
		\left[
		    1 - \me^{-\mi(M^{2} - m^{2})(\lambda - \lambdaz)}
		\right]
		[\oppsixl, \oppsitlz{\xz}] \,.
    \end{split}
\end{equation*}

Define
\begin{equation*}
    \oppsi_{f}(x) 
	\equiv \int_{\lambdaz}^{\infty} \dl\, 
	       f(\lambda - \lambdaz) \oppsixl \,.
\end{equation*}
This is a generalization of the definition of the integrated field 
operator, which reduces to $\oppsill(x)$ for $f(\lambda - \lambdaz) = 
1$. Taking, instead,
\begin{equation*}
    f(\lambda - \lambdaz)
	 = 1 - \me^{-\mi(M^{2} - m^{2})(\lambda - \lambdaz)}
\end{equation*}
then gives
\begin{equation*}
    [\oppsi_{f}(x),\oppsill\dadj(\xz)]
	= \propsym(x-\xz;m) - \propsym(x-\xz;M) \,.
\end{equation*}

Thus, we can achieve Pauli-Villars regularization simply by replacing
$\oppsinll(x)$ with $\oppsi_{n,f_{n}}(x)$ in the vertex operator
$\opV$, for each particle type $n$ (with $f_{n}$ defined as given
above, but with $m$ replaced by $m_{n}$), but leaving the adjoints
$\oppsinll\dadj(x)$ unchanged.  This is similar to the idea discussed
in Sect.  IIID of \refcite{seidewitz06a}, based on earlier proposals
in \refcites{frastai95,land97,land03}, to regularize by making the
interaction coupling dependent on the intrinsic path length (though in
those cases the function for correlation in the path parameter is more
physically motivated).

The topic of regularization in parameterized QFT, and its relation to 
renormalization of the expansion of $\opG$, will be explored in 
future work. For the present paper, we will simply consider 
\eqn{eqn:C6} to be a formal series expansion and see, in the next 
section, how this expansion can be used to reproduce the similarly 
formal series expansion that results from perturbative scattering 
theory in traditional QFT.

\subsection{Scattering} \label{sect:scattering}

This section shows how the usual scattering amplitudes can be
reproduced by the parameterized formalism, given a vertex operator of
the form defined in \eqn{eqn:C8}.  We will be using only integrated
fields in this section.  Presume that a specific, fixed value of
$\lambdaz$ has been chosen, so the notation may be simplified by
omitting explicit references to $\lambdaz$.

Consider, first, scattering that takes place limited to just a
specific four volume $\Vol$.  A vertex operator $\opV_{\Vol}$
restricted to $\Vol$ may be defined as in \eqn{eqn:C7} but with the
integral over all spacetime replaced by an integral over only the
four-volume $\Vol$.  That is,
\begin{equation*}
    \opV_{\Vol} \equiv \int_{\Vol} \dfour x\, \opV(x) \,.
\end{equation*}
The corresponding interaction operator restricted to $\Vol$ is then
\begin{equation*}
    \opG_{\Vol} \equiv \me^{-\mi \opV_{\Vol}} \,.
\end{equation*}
It is another immediate corollary of \thm{thm:vertex} that the
$\opV(x)$ commute for different $x$, so the $\opV_{\Vol}$ commute for
different $\Vol$.  Therefore, the restricted interaction operator has
the property
\begin{equation*}
    \opG_{\Vol_{1} \cup \Vol_{2}} = \opG_{\Vol_{1}}\opG_{\Vol_{2}} \,,
\end{equation*}
which allows for easy separation of interactions within a system in a 
certain four-volume from interactions that occur in the environment 
of the system \cite{seidewitz11}.

Now consider states $\ket{\Phi\In}$ that are superpositions of
position states with positions outside of $\Vol$, that is, they are
constructed from applications of the field operators
$\oppsina\dadj(x)$ and $\oppsinr\dadj(x)$ that have $x$ outside of
$\Vol$.  Using \eqn{eqn:C4}, but with the restricted interaction
operator, then gives $\opG_{\Vol}\ket{\Phi\In}$ as the state
representing the particles entering $\Vol$ from outside and
interacting there (or not interacting at all).  Since the free and
interacting Hilbert spaces are the same, we can expand the interacting
state in the free position state basis:
\begin{equation*}
    \opG_{\Vol}\ket{\Phi\In} = 
	\sum_{N=0}^{\infty}  
	\left( \prod_{i=1}^{N} \intfour x_{i} \right)
	\ketlz{x_{1},\dots,x_{N}}
	\bralz{x_{1},\dots,x_{N}}\opG_{\Vol}\ket{\Phi\In} \,.
\end{equation*}

The coefficients $\bra{x_{1},\dots,x_{N}} \opG_{\Vol} \ket{\Phi\In}$
are the probability amplitudes that the incoming particles, after
interacting in $\Vol$, result in $N$ particles at the given positions.
Thus, if we construct states $\bra{\Phi\Out}$ using the integrated
fields $\oppsina(x)$ and $\oppsinr(x)$, with $x$ also outside of
$\Vol$, then
\begin{equation*}
    \bra{\Phi\Out}\opG_{\Vol}\ket{\Phi\In} =
	\sum_{N=0}^{\infty}  
	\left( \prod_{i=1}^{N} \intfour x_{i} \right)
	\inner{\Phi\Out}{x_{1},\dots,x_{N}; \lambdaz}
	\bralz{x_{1},\dots,x_{N}}\opG_{\Vol}\ket{\Phi\In}
\end{equation*}
is the amplitude for the incoming particles to scatter only in $\Vol$
(or not interact at all) and then propagate \emph{out} of $\Vol$ into
the outgoing state $\bra{\Phi\Out}$.

If we take the series expansion for $\opG_{\Vol}$,
\begin{equation*}
    \opG_{\Vol} = \sum_{m}\frac{(-\mi)^{m}}{m!}\opV_{\Vol}^{m} \,,
\end{equation*}
then the $m$th term in the series represents the case of exactly $m$
interactions taking place within $\Vol$.  Using this expansion, any
scattering amplitude $\bra{\Phi\Out} \opG_{\Vol} \ket{\Phi\In}$ can be
written as a (weighted) sum of terms of the form
\begin{equation*}
    \bra{0}\oppsinn{n'_{1}}(x'_{1})\cdots\oppsinn{n'_{N}}(x'_{N})
      \frac{(-\mi)^{m}}{m!}\opV_{\Vol}^{m}
      \oppsinn{n_{1}}\dadj(x_{1})\cdots
	  \oppsinn{n_{M}}\dadj(x_{M})\ket{0} \,,
\end{equation*}
where the $x_{i}$ and $x'_{j}$ are all outside $\Vol$ and each 
$n_{i}$ and $n'_{j}$ may be a particle or antiparticle type.

As in the traditional derivation of Feynman diagrams, we can now move
all annihilation operators to the right in each such term, generating
commutators with intermediate field applications of the same particle
type \cite{peskin95,weinberg95,ticciati99}.  Thus, the pairing of a
factor $\oppsin(x')$ in $\opV_{\Vol}$ with an ``in'' particle factor
$\oppsina\dadj(x)$ gives the commutator
\begin{equation} \label{eqn:C9}
    [\oppsin(x'), \oppsina\dadj(x)] 
	= [\oppsina(x'), \oppsina\dadj(x)]
	= \propsymn(x'-x) \,,
\end{equation}
and the pairing of a factor $\oppsin\dadj(x')$ in $\opV_{\Vol}$ with
an ``in'' antiparticle factor $\oppsinr\dadj(x)$ gives the
commutator
\begin{equation} \label{eqn:C10}
    [\oppsin\dadj(x'), \oppsinr\dadj(x)] 
	= [\oppsinr(x'), \oppsinr\dadj(x)]
	= \propsymn(x'-x) \,.
\end{equation}
Pairings of ``out'' particle and antiparticle annihilation operators
with corresponding creation operators within $\opV$ (or directly with
those for incoming particles) give similar factors, as do the pairings
of annihilation and creation operators within $\opV$.

Next, consider a region $\Vol(t\F,t\I)$ bounded by hyperplanes at $t =
t\I$ and $t = t\F > t\I$, but unbounded in space.  Further, let
$\ket{\Phi\In} = \ket{\Phi\I}$, a superposition of position states
with $t < t\I$, and $\bra{\Phi\Out} = \bra{\Phi\F}$, a superposition
of positions states with $t > t\F$. Then, in the commutators of 
\eqns{eqn:C9} and \eqref{eqn:C10} for incoming particles and 
antiparticles, $t' = x^{\prime 0} \geq t\I > t = x^{0}$, while, in the 
commutators for outgoing particles and antiparticles, $t' > t\F \geq 
t$.

Using \eqn{eqn:A1}, it is clear that, for $t' > t$, $\propsymn(x'-x)
= \propsymn(t'-t, \threexp-\threex) = \propasymn(t'-t,
\threexp-\threex)$, which represents the propagation of an on-shell,
positive-energy particle.  That is, particles propagate into or out of
$\Vol(t\F,t\I)$ on-shell.  Within $\Vol(t\F,t\I)$, $x'$ and $x$ are
not time-ordered, so propagation is still off-shell and virtual, as
given by $\propsymn(x'-x)$.

Thus, rather than constructing virtual-particle propagation from
on-shell propagation, as in traditional QFT, we have recovered
on-shell propagation as a special case.  The basic $\oppsinar(x)$
fields themselves represent off-shell virtual particles and do not in
general satisfy the Klein-Gordon equation.  On the other hand, it is
straightforward to also construct fields that are positive-energy
solutions of the Klein-Gordon equation:
\begin{equation*}
    \adv{\oppsinar}(x) 
	\equiv \intfour \xz\, \propan \oppsinarlz{\xz} \,,
\end{equation*}
Such fields directly represent on-shell particles and antiparticles,
such that
\begin{equation*}
    [\adv{\oppsinar}(x'), \oppsinar\dadj(x)] = \propasymn(x'-x) \,.
\end{equation*}
Now, suppose that, in each term in the expansion of
$\opG_{\Vol(t\I,t\F)}$, we replace all occurrences of $\oppsinar(x)$
with $\adv{\oppsinar}(x)$ (but do \emph{not} replace the
$\oppsinar\dadj(x)$).  Call the resulting operator $\opS(t\F,t\I)$.

To determine the effect of this replacement, consider that the vertex
operator $\opV_{\Vol(t\F,t\I)}$ may be written
\begin{equation*}
    \opV_{\Vol(t\F,t\I)} = \int_{t\I}^{t\F} \dif t \intthree x\, 
	\opV(t,\threex) \,.
\end{equation*}
Because the $\opV(t,\threex)$ commute for different $t$ (per
\thm{thm:vertex}), each term in the series expansion of
$\opG_{\Vol(t\F,t\I)}$ may then be rewritten to time-order the vertex
operators:
\begin{equation*}
    \begin{split}
	\frac{1}{m!}\opV_{\Vol(t\I,t\F)}^{m}
	    &= \int_{t\I}^{t\F} \dt_{m} \int_{t\I}^{t_{m}} \dt_{m-1}
		    \cdots \int_{t\I}^{t_{2}} \dt_{1} \,
		    \opV(t_{m}) \cdots \opV(t_{1}) \\
	    &= \frac{1}{m!}
	       \int_{t\I}^{t\F} \dt_{m} \int_{t\I}^{t\F} \dt_{m-1}
		    \cdots \int_{t\I}^{t\F} \dt_{1} \,
		    T[\opV(t_{m}) \cdots \opV(t_{1})] \,,
    \end{split}
\end{equation*}
where
\begin{equation*}
    \opV(t) \equiv \intthree x\, \opV(t,\threex)
\end{equation*}
and $T[\cdots]$ indicates the sum of all time-ordered permutations of
the bracketed factors.  

Make the $\oppsinar$ to $\adv{\oppsinar}$ replacement in $\opV(t)$ and
call the resulting operator $\opv(t)$.  The series expansion for
$\opS(t\F,t\I)$ is then
\begin{equation*}
    \opS(t\F, t\I)
	\equiv \sum_{m} \frac{(-\mi)^{m}}{m!}
		\int_{t\I}^{t\F} \dt_{m} \int_{t\I}^{t\F} \dt_{m-1}
		    \cdots \int_{t\I}^{t\F} \dt_{1} \,
		    T[\opv(t_{m}) \cdots \opv(t_{1})]
	=      \me^{\T[-\mi\int_{t\I}^{t\F} \dt\, \opv(t)]} \,,
\end{equation*}
where time ordering is now not optional, since the $\opv(t)$ do not
commute. This is essentially just a Dyson expansion.

Make similar field operator replacements in $\bra{\Phi\F}$ to get
$\bra{\adv{\Phi}\F}$.  Then the amplitude $\bra{\adv{\Phi}\F}
\opS(t\F,t\I) \ket{\Phi\I}$ has an expansion that is term-for-term
parallel to the expansion of $\bra{\Phi\F} \opG_{\Vol{t\F,t\I}
}\ket{\Phi\I}$, but with the appropriate operator replacements in each
term.  For incoming and outgoing particles, this will clearly result
in commutators that generate the same $\propasym(x'-x)$ factors as
before the replacement.  For internal propagations, the result will be
as in the usual derivation for Feynman diagrams from the time-ordered
Dyson series \cite{peskin95,weinberg95,ticciati99}:
\begin{equation*}
    \begin{split}
	&\theta(t'-t)[\adv{\oppsina}(t',\threexp),
	              \oppsina\dadj(t,\threex)] +
	 \theta(t-t')[\adv{\oppsinr}(t,\threex), 
		     \oppsinr\dadj(t',\threexp)] \\
	&\qquad\qquad\qquad\qquad = 
	    \theta(t'-t)\propasymn(t'-t,\threexp-\threex) +
	    \theta(t-t')\propasymn(t-t',\threex-\threexp) \\
	&\qquad\qquad\qquad\qquad = 
	    \theta(t'-t)\propasymn(t'-t,\threexp-\threex) +
	    \theta(t-t')\propasymn(t'-t,\threexp-\threex)\conj \\
	&\qquad\qquad\qquad\qquad = 
	    \propsymn(t'-t, \threexp-\threex) \,,
    \end{split}
\end{equation*}
which is virtual particle propagation, as before.  The expansions
before and after the field replacements therefore produce the same
propagation factors, so
\begin{equation*}
    \bra{\adv{\Phi}\F}\opS(t\F,t\I)\ket{\Phi\I} =
	\bra{\Phi\F}\opG_{\Vol(t\F,t\I)}\ket{\Phi\I} \,.
\end{equation*}

Now let $t\F \rightarrow +\infty$ and $t\I \rightarrow -\infty$.  Then
$\opS(t\F,t\I) \rightarrow \opS$, the traditional scattering operator,
and $\Vol(t\F,t\I)$ goes to all spacetime, so $\opG_{\Vol(t\F,t\I)}
\rightarrow \opG$.  In taking the time limits in $\bra{\Phi\F}$ and
$\ket{\Phi\I}$, we would like to hold the three-momenta of the
particles to fixed, given values.  To do this, construct these states
from the time-dependent three-momentum field operators
\begin{equation*}
    \oppsinar(t,\threep) \equiv (2\pi)^{-3/2} \intthree x\,
	\me^{-\mi(-\Ep t + \threep\cdot\threex)}
	\oppsinar(t,\threex)
\end{equation*}
with their respective adjoints.  The corresponding three-momentum
field operators to use in $\bra{\adv{\Phi}\F}$ are
\begin{equation*}
    \adv{\oppsinar}(t,\threep) = (2\pi)^{-3/2} \intthree x\,
	\me^{-\mi(-\Ep t + \threep\cdot\threex)}
	\adv{\oppsinar}(t,\threex) \,.
\end{equation*}

For an incoming particle, the propagation factor in
$\bra{\adv{\Phi}\F} \opS \ket{\Phi\I}$ is then given by
\begin{equation*}
    \begin{split}
	[\adv{\oppsinar}(t', \threexp), \oppsinar\dadj(t, \threep)]
	    &= (2\pi)^{-3/2} \intthree x\,
		\me^{\mi(-\Ep t + \threep\cdot\threex)}
		\propasymn(t'-t, \threexp-\threex) \\
	    &= (2\pi)^{-3/2}(2\Ep)^{-1}
		\me^{\mi(-\Ep t' + \threep \cdot \threexp)} \,.
    \end{split}
\end{equation*}
This is the proper factor for an incoming, on-shell particle of
three-momentum $\threep$ and is independent of $t$ as $t \rightarrow
-\infty$. For an outgoing particle, the factor is
\begin{equation*}
    \begin{split}
	[\adv{\oppsinar}(t', \threepp), \oppsinar\dadj(t, \threex)]
	    &= (2\pi)^{-3/2} \intthree x'\,
		\me^{-\mi(-\Ep t' + \threepp\cdot\threexp)}
		\propasymn(t'-t, \threexp-\threex) \\
	    &= (2\pi)^{-3/2}(2\Ep)^{-1}
		\me^{-\mi(-\Epp t + \threepp \cdot \threex)} \,.
    \end{split}
\end{equation*}
Again, this is the proper factor for an outgoing, on-shell particle 
of three-momentum $\threepp$ and is independent of $t'$ as $t' 
\rightarrow +\infty$.

Therefore, in the $t\F \rightarrow +\infty$, $t\I \rightarrow -\infty$
limit taken as above, $\bra{\adv{\Phi\F}} \opS \ket{\Phi\I}$ is just
the traditional scattering amplitude between on-shell, three-momentum
states.  And, since, in this limit, $\bra{\adv{\Phi}\F}\opS\ket{\Phi\I}
= \bra{\Phi\F}\opG\ket{\Phi\I}$, $\bra{\Phi\F}\opG\ket{\Phi\I}$ is the
same scattering amplitude.

\section{Conclusion} \label{sect:conclusion}

Mathematically, the parameterized approach is appealing.  It makes the
formalism for relativistic quantum theory much more parallel to that
of the non-relativistic theory, and it does not treat time in a
privileged way that is in conflict with the nature of relativistic
spacetime.  Nevertheless, it is reasonable to ask whether this benefit
is worth the cost of introducing another parameter that is (at least
as presented here), in the end, physically unobservable.

Recall, though, that all the empirically tested results of traditional
QFT come from canonical, perturbative QFT. And, even before issues of
regularization and renormalization are addressed, Haag's theorem
already renders the formalism of perturbative QFT inconsistent.  To
date, it has been impossible to develop any formally rigorous,
axiomatic theory of QFT with interactions.

The parameterized formulation presented here resolves this problem.
The key to doing so is basing the theory on a Hilbert space of
off-shell states, as given in \sect{sect:hilbert}.  A Hamiltonian
operator on such states is then defined as a generator of propagation
in the path parameter, rather than time.  And a vacuum state is a null
eigenstate unique relative to a chosen Hamiltonian operator, rather
than being \emph{a priori} unique in the Hilbert space.

The corresponding axioms for fields given in \sect{sect:fields} admit
a free-field theory equivalent to that of traditional QFT, as shown in
\sect{sect:free}.  However, unlike traditional, canonical QFT, Haag's
theorem is absent from parameterized QFT, crucially because the
Hamiltonian generates evolution in a parameter separate from the
spacetime coordinates that are constrained by Poincar\'e invariance.
It is then relatively straightforward to construct an interacting
theory that also meets the parameterized QFT axioms, as shown in
\sect{sect:interacting}.

Further, by allowing the Fock representation of a free field to be
extended to the corresponding interacting field (as discussed in
\sect{sect:states}), the intuitive quanta interpretation of the free
theory can be carried over to the interacting theory.  Indeed, the
approach can also provide for a fuller interpretation in terms of
spacetime paths, decoherence and consistent histories over spacetime,
as addressed in \refcites{seidewitz06b,seidewitz11}.

And, as shown in \sect{sect:scattering}, the parameterized formalism
can be used to derive series expansions for scattering amplitudes that
formally match those derived using traditional perturbative QFT, term
for term.  Admittedly, this shows no more than that the key
empirically tested results of traditional QFT can be duplicated by
parameterized QFT. But it does explain the otherwise surprising fact
that perturbative QFT produces excellent empirical results even
though, in the face of Haag's theorem, it is mathematically
inconsistent as usually formulated.

Moreover, as argued in \refcite{fraser06}, different formulations of
QFT may lead to different interpretations, even while being
empirically equivalent.  Clearly, one would like to base any
interpretation on a formulation that is rigorously defined
mathematically.  But this is problematic for traditional canonical
QFT, since models of realistic interactions using the canonical
formulation run afoul of Haag's Theorem.  Parameterized QFT does not
have this problem.

Of course, this does not resolve all the mathematical issues with
traditional QFT, such as those involved in renormalization.  And the
axiomatic approach discussed here has not yet been extended to cover
gauge theories, as required to model real interactions.  Nevertheless,
such issues all revolve around the mathematical treatment of
interactions in QFT, and having a consistent foundation for modeling
interactions therefore seems to be a prerequisite to resolving them.

In that regard, the work presented here is offered as a step toward
building a firmer foundation, both mathematically and
interpretationally, for QFT in general.

\appendix

\section*{Appendix: Commutivity of the Vertex Operator}

\begin{theorem*}
    Let
    \begin{equation*}
	\opV_{\lambda}(x) = 
	    g :\prod_{i} \oppsinnlll{n_{i}}{\lambda}\dadj(x) 
	       \oppsinnlll{n_{i}}{\lambda}(x)
	       \prod_{j} \oppsinnlll{n_{j}}{\lambda}'(x): \,,
    \end{equation*}
    where
    \begin{equation*}
	 \oppsinnlll{n}{\lambda}(x) = 
	     \oppsinnlll{\na}{\lambda}(x) + 
	     \oppsinnlll{\nr}{\lambda}\dadj(x)
    \end{equation*}
    and
    \begin{equation*}
	\oppsinnlll{n}{\lambda}'(x) = 
	    \oppsinnlll{n}{\lambda}(x) + 
	    \oppsinnlll{n}{\lambda}\dadj(x) \,.
    \end{equation*}
    Then
    \begin{equation*}
	[\opV_{\lambda_{1}}(x_{1}), \opV_{\lambda_{2}}(x_{2})] = 0 \,,
    \end{equation*}
    for all values of the $x_{i}$ and $\lambda_{i}$.
\end{theorem*}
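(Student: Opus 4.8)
The plan is to reduce the claim to the pairwise commutativity of the individual integrated-field factors making up the two vertices, since a product of operators at $(x_1,\lambda_1)$ commutes with a product at $(x_2,\lambda_2)$ as soon as every factor of the first commutes with every factor of the second; in that situation the normal orderings in \eqn{eqn:C8} are irrelevant, because reordering mutually commuting operators changes nothing. Each vertex is a polynomial in the three building blocks $\oppsinnlll{n}{\lambda}(x)$, $\oppsinnlll{n}{\lambda}\dadj(x)$, and $\oppsinnlll{n}{\lambda}'(x) = \oppsinnlll{n}{\lambda}(x) + \oppsinnlll{n}{\lambda}\dadj(x)$. Factors carrying different particle-type labels commute by hypothesis, so I only need to examine a single type $n$ together with its antiparticle, and by linearity of the primed field it suffices to control the commutators among $\oppsinnlll{n}{\lambda_1}(x_1)$, $\oppsinnlll{n}{\lambda_1}\dadj(x_1)$ and $\oppsinnlll{n}{\lambda_2}(x_2)$, $\oppsinnlll{n}{\lambda_2}\dadj(x_2)$.

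First I would dispose of the like-type commutators. Writing $\oppsinnlll{n}{\lambda}(x) = \oppsinnlll{\na}{\lambda}(x) + \oppsinnlll{\nr}{\lambda}\dadj(x)$ and dropping the particle--antiparticle cross terms (which vanish because a field commutes with the corresponding antiparticle field and its adjoint), the commutator $[\oppsinnlll{n}{\lambda_1}(x_1), \oppsinnlll{n}{\lambda_2}(x_2)]$ becomes a sum of an annihilation--annihilation and a creation--creation commutator, both of which vanish by \axm{axm:commutation}; the same holds for $[\oppsinnlll{n}{\lambda_1}\dadj(x_1), \oppsinnlll{n}{\lambda_2}\dadj(x_2)]$. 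The only genuinely nontrivial case is the mixed commutator $[\oppsinnlll{n}{\lambda_1}(x_1), \oppsinnlll{n}{\lambda_2}\dadj(x_2)]$, and once that is shown to vanish, every commutator involving the self-adjoint $\oppsinnlll{n}{\lambda}'$ follows by linearity.

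The hard part will be this mixed commutator, and it is where the particle--antiparticle construction earns its keep. After again discarding cross terms, it splits into a forward piece $C_+ = [\oppsinnlll{\na}{\lambda_1}(x_1), \oppsinnlll{\na}{\lambda_2}\dadj(x_2)]$ and a backward piece $C_- = [\oppsinnlll{\nr}{\lambda_1}\dadj(x_1), \oppsinnlll{\nr}{\lambda_2}(x_2)]$. Recalling that the special adjoint of an integrated field is the unintegrated Heisenberg adjoint evaluated at the base parameter, and that \eqn{eqn:C1} makes each unintegrated commutator the c-number kernel $\kersymn$, I would carry out the single surviving $\lambda$-integral in each piece to obtain $C_+ = \int_{\lambda_1 - \lambda_2}^{\infty} \dl\, \kersymn(x_1 - x_2; \lambda)$ and, after reversing the orientation of the reverse-field integral, $C_- = -\int_{\lambda_1 - \lambda_2}^{\infty} \dl\, \kersymn(x_2 - x_1; \lambda)$. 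The kernel of \eqn{eqn:A3} is even in its spacetime argument (substitute $p \to -p$), so $\kersymn(x_2 - x_1; \lambda) = \kersymn(x_1 - x_2; \lambda)$ and hence $C_+ + C_- = 0$. Crucially, the shifted lower limit $\lambda_1 - \lambda_2$ is the same in both integrals, so the cancellation is exact even when the base parameters $\lambda_1$ and $\lambda_2$ differ.

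With all pairwise commutators of the constituent fields equal to zero, the two normal-ordered products are products of mutually commuting operators and therefore commute, giving $[\opV_{\lambda_1}(x_1), \opV_{\lambda_2}(x_2)] = 0$ for all $x_i$ and $\lambda_i$. The corollary $[\opV(\lambda_1), \opV(\lambda_2)] = 0$ then follows at once by integrating this identity over $x_1$ and $x_2$.
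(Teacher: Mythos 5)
Your proposal is correct and follows essentially the same route as the paper's appendix proof: reduce to pairwise commutators of the constituent integrated fields, observe that the forward-particle commutator $\int_{\lambda_1-\lambda_2}^{\infty}\dif\lambda\,\kersymn(x_1-x_2;\lambda)$ exactly cancels the reverse-antiparticle commutator after the change of variables and the $p\to-p$ evenness of the kernel, and then let the primed fields and the different-type factors follow by linearity and the type-wise commutation hypothesis. The only cosmetic difference is that the paper justifies passing through the normal ordering via an explicit point-splitting limit (subtracting a c-number contraction) rather than your ``reordering commuting factors'' remark, but the two are equivalent at the paper's level of rigor.
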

\begin{proof}
    First, consider that
    \begin{equation*}
	\begin{split}
	    [\oppsinnlll{\na}{\lambda_{1}}(x_{1}),
             \oppsinnlll{\na}{\lambda_{2}}\dadj(x_{2})]
	       &= \int_{\lambda_{1}}^{\infty}\dl\,
		      \propsym(x_{1}-x_{2}, \lambda-\lambda_{2}) \\
	       &= \int_{\lambda_{1}-\lambda_{2}}^{\infty}\dl\,
		      \propsym(x_{1} - x_{2}, \lambda)
	\end{split}
    \end{equation*}
    and
    \begin{equation*}
	\begin{split}
	    [\oppsinnlll{\nr}{\lambda_{2}}(x_{2}),
             \oppsinnlll{\nr}{\lambda_{1}}\dadj(x_{1})]
	       &= \int_{-\infty}^{\lambda_{2}}\dl\,
		      \propsym(x_{2}-x_{1}, \lambda_{1}-\lambda) \\
	       &= \int_{\infty}^{\lambda_{1}-\lambda_{2}}\dl\,
		      \propsym(x_{2} - x_{1}, -\lambda) \\
	       &= \int_{\lambda_{1}-\lambda_{2}}^{\infty}\dl\,
		      \propsym(x_{1} - x_{2}, \lambda) \\
	       &= [\oppsinnlll{\na}{\lambda_{1}}(x_{1}),
                   \oppsinnlll{\na}{\lambda_{2}}\dadj(x_{2})] \,,
	\end{split}
    \end{equation*}
    so
    \begin{equation*}
	\begin{split}
	    [\oppsinnlll{n}{\lambda_{1}}(x_{1}),
             \oppsinnlll{n}{\lambda_{2}}\dadj(x_{2})]
		&= [\oppsinnlll{\na}{\lambda_{1}}(x_{1}) + 
	            \oppsinnlll{\nr}{\lambda_{1}}\dadj(x_{1}),
		    \oppsinnlll{\na}{\lambda_{2}}\dadj(x_{2}) + 
		    \oppsinnlll{\nr}{\lambda_{2}}(x_{2})]  \\
		&= [\oppsinnlll{\na}{\lambda_{1}}(x_{1}),
		    \oppsinnlll{\na}{\lambda_{2}}\dadj(x_{2})] -
		   [\oppsinnlll{\nr}{\lambda_{2}}(x_{2}),
		    \oppsinnlll{\nr}{\lambda_{1}}\dadj(x_{1})] \\
		&= 0 \,,
	\end{split}
    \end{equation*}
    and, similarly, $[\oppsinnlll{n}{\lambda_{1}}\dadj(x_{1}),
    \oppsinnlll{n}{\lambda_{2}}(x_{2})] = 0$.  Next, formally
    defined the normal-ordered products by the limiting process
    \begin{equation*}
	\begin{split}
	    :\oppsinnlll{n}{\lambda_{i}}\dadj(x_{i})
	     \oppsinnlll{n}{\lambda_{i}}(x_{i}): \,
		&= \lim_{x'_{i}, x''_{i} \rightarrow x_{i}}
		   \{\oppsinnlll{n}{\lambda_{i}}\dadj(x'_{i})
		     \oppsinnlll{n}{\lambda_{i}}(x''_{i}) -
		     [\oppsinnlll{\nr}{\lambda_{i}}(x''_{i}),
		      \oppsinnlll{\nr}{\lambda_{i}}\dadj(x'_{i})]\} \\
		&= \lim_{x'_{i}, x''_{i} \rightarrow x_{i}}
		   \{\oppsinnlll{n}{\lambda_{i}}\dadj(x'_{i})
		     \oppsinnlll{n}{\lambda_{i}}(x''_{i}) - 
		     \propsym(x''_{i} - x'_{i})\} \,.
	\end{split}
    \end{equation*}
    Then
    \begin{equation*}
	\begin{split}
	    [:\oppsinnlll{n}{\lambda_{1}}\dadj(x_{1})
		\oppsinnlll{n}{\lambda_{1}}(x_{1}):,
		:\oppsinnlll{n}{\lambda_{2}}\dadj(x_{2})
		\oppsinnlll{n}{\lambda_{2}}(x_{2}):]
		& = \lim_{\substack{
			 x'_{1}, x''_{1} \rightarrow x_{1}\\ 
			 x'_{2}, x''_{2} \rightarrow x_{2}}}
		 [\oppsinnlll{n}{\lambda_{1}}\dadj(x_{1}')
		    \oppsinnlll{n}{\lambda_{1}}(x_{1}''),
		    \oppsinnlll{n}{\lambda_{2}}\dadj(x_{2}')
		    \oppsinnlll{n}{\lambda_{2}}(x_{2}'')] \\
		& = 0 \,.
 	\end{split}
    \end{equation*}
    Further,
    \begin{equation*}
	\begin{split}
	    [\oppsinnlll{n}{\lambda_{1}}'(x_{1}),
	     \oppsinnlll{n}{\lambda_{2}}'(x_{2})]
		&= [\oppsinnlll{n}{\lambda_{1}}(x_{1}) + 
		    \oppsinnlll{n}{\lambda_{1}}\dadj(x_{1}),
		    \oppsinnlll{n}{\lambda_{2}}\dadj(x_{2}) + 
		    \oppsinnlll{n}{\lambda_{2}}(x_{2})] \\
		&= [\oppsinnlll{n}{\lambda_{1}}(x_{1}),
		    \oppsinnlll{n}{\lambda_{2}}\dadj(x_{2})] +
		   [\oppsinnlll{n}{\lambda_{1}}\dadj(x_{1}),
		    \oppsinnlll{n}{\lambda_{2}}(x_{2})]
		 = 0 \,.
	\end{split}
    \end{equation*}
    
    Thus, the factor for each particle type in the
    definition of $\opV_{\lambda_{1}}(x_{1})$ commutes with the
    similar factor in the definition of $\opV_{\lambda_{2}}(x_{2})$.
    Since fields of different particle types all commute with each
    other, $\opV_{\lambda_{1}}(x_{1})$ as a whole commutes with
    $\opV_{\lambda_{2}}(x_{2})$ as a whole.
\end{proof}

\bibliography{../../RQMbib}

\end{document}